\renewcommand{\cite}{\citep}
\newtheorem{theorem}{Theorem}
\newtheorem{remark}{Remark}
\newtheorem{assumption}{Assumption}
\renewcommand\bibsection%
\newcommand{\ignore}[1]{}
\newcommand*{\TitleFont}{%
      \usefont{\encodingdefault}{\rmdefault}{}{n}%
      \fontsize{16}{20}%
      \selectfont}
\DeclareMathAlphabet{\mathcal}{OMS}{cmsy}{m}{n}      
\newcommand{\Pb}{\mathbb{P}}
\newcommand{\Rb}{\mathbb{R}}
\newcommand{\0}{\mathbf{0}_n}
\newcommand{\1}{\mathbf{1}_n}
\newcommand{\diag}{\mathtt{diag}}
\newcommand{\oprocendsymbol}{\hbox{$\bullet$}}
\newcommand{\oprocend}{\relax\ifmmode\else\unskip\hfill\fi\oprocendsymbol}
\newcommand{\rev}[1]{{#1}}
\begin{document}

\title{\TitleFont A Closed-Loop Framework for Inference, Prediction and Control of SIR Epidemics on Networks}
\author{~Ashish~R.~Hota, Jaydeep~Godbole and Philip~E.~Par\'e\thanks{Ashish R. Hota and Jaydeep Godbole are with the Department of Electrical Engineering, IIT Kharagpur, India. Philip E. Par\'e is with the School of Electrical and Computer Engineering, Purdue University, USA. E-mail: ahota@ee.iitkgp.ac.in, jgodbole7@gmail.com, philpare@purdue.edu.}}%

\date{}
\maketitle

\begin{abstract}
Motivated by the ongoing pandemic COVID-19, we propose a closed-loop framework that combines inference from testing data, learning the parameters of the dynamics and optimal resource allocation for controlling the spread of the susceptible-infected-recovered (SIR) epidemic on networks. Our framework incorporates several key factors present in testing data, such as the fact that high risk individuals are more likely to undergo testing. We then present two tractable optimization problems to evaluate the trade-off between controlling the growth-rate of the epidemic and the cost of non-pharmaceutical interventions (NPIs). We illustrate the significance of the proposed closed-loop framework via extensive simulations and analysis of real, publicly-available testing data for COVID-19. Our results illustrate the significance of early testing and the emergence of a second wave of infections if NPIs are prematurely withdrawn.
\end{abstract}

\section{Introduction}
\label{section:introduction}

Mathematical modeling of infectious diseases that spread through the human population has a long history (see \cite{hethcote2000mathematics,pastor2015epidemic,draief2010epidemics,nowzari2016analysis,mei2017dynamics} for detailed surveys). One of the most fundamental mathematical models of epidemics is the susceptible-infected-recovered (SIR) epidemic where individuals can be in one of three possible compartments or states: susceptible, infected or recovered. Individuals who are susceptible get potentially infected by coming into contact with infected neighbors, while infected individuals recover at a certain rate. Individuals who recover develop immunity to the disease and do not become infected again. Since there is no strong evidence of widespread re-infection for at least a few months after recovery in the ongoing COVID-19 pandemic \cite{ota2020will}, the SIR epidemic and its variants have emerged as a popular framework to study its evolution \cite{pare2020modeling,giordano2020modelling}. Consequently, we focus on estimation and control of the SIR epidemic on networks in this work.  

As observed in case of COVID-19, infectious diseases, in the absence of appropriate medicine and vaccines at the early stages of an outbreak, non-pharmaceutical intervention (NPI) strategies must be deployed. The primary NPIs to control the spread of such epidemics include 
\begin{itemize}
\item reducing the interaction between nodes (for example, by restricting travel and imposing lock-down measures), and
\item deploying more resources (in terms of healthcare personnel, dedicated hospitals and medical equipment) thereby increasing the rate at which infected individuals get cured. 
\end{itemize}
Both of these interventions have a significant economic cost. Furthermore, in a networked setting, different nodes have different degrees of epidemic outbreak, and require different degrees of interventions.

\subsection{Related work and research gaps}

Following early works in \cite{wan2007network,wan2008designing,preciado2014optimal}, the existing literature on resource or NPI allocation for controlling the spread of epidemics has primarily focused on the susceptible-infected-susceptible (SIS) epidemic model on networks. In contrast with the SIR epidemic, under the SIS epidemic an individual after recovery can potentially be infected again if she/he comes in contact with other infected individuals. Nevertheless, for the SIS epidemic, there is a spectral condition which characterizes whether the disease persists in the population or if the proportion of infected population decays to zero.\footnote{\rev{Specifically, if the matrix that characterizes the linearized continuous-time SIS epidemic dynamic has all its eigenvalues in the left half of the complex plane, then the disease-free equilibrium is the unique asymptotically stable equilibrium of the epidemic. Otherwise, there is a unique nonzero endemic equilibrium of the dynamic which is locally asymptotically stable, and the disease-free equilibrium is an unstable equilibrium \cite{mei2017dynamics}.}} It was shown in \cite{preciado2014optimal} that the problem of optimal resource allocation to eradicate the disease is an instance of a {\it geometric program (GP)} \cite{boyd2007tutorial} which can be solved efficiently. This approach has been extended in several directions, such as to account for uncertainty in the network structure \cite{han2015data}, distributed algorithms \cite{ramirez2018distributed,mai2018distributed}, among others. However, the above optimization problems are solved off-line and do not use feedback to adapt the solution as the epidemic spreads.\footnote{Online approaches based on optimal control theory have been studied in \cite{eshghi2014optimal,zaman2008stability} to compute vaccination strategies, which are different from NPIs considered in this work.}

In contrast, investigations of optimal resource allocation to contain the spread of the SIR epidemic is particularly challenging because
\begin{itemize}
\item there is no known tractable characterization of the eventual number of recovered individuals for the SIR epidemic which can be minimized in an off-line manner,\footnote{Estimating the final size of the recovered population is challenging, and existing approaches rely on approximations that are not amenable for tractable optimization \cite{miller2012note}.} and 
\item the instantaneous growth rate $\lambda_{\max}(t)$ \rev{(defined in Section \ref{section:sir} and its connection with the reproduction number made precise in Remark \ref{remark:lambda_max})} depends on the proportion of susceptible individuals at each node which is time-varying and may not be accurately known. 
\end{itemize}
As a result, there have been few investigations on this problem until recently. In \cite{ogura2016efficient}, the authors showed that a GP can be formulated to minimize the expected cumulative number of people who get infected when the reproduction number is less than $1$. However, this condition is restrictive since for most epidemics the proportion of infected individuals shows an initial exponential increase before declining. 

\rev{With the emergence of COVID-19, extensions to the classical SIR model with additional infection states such as asymptomatic, hospitalized and quarantined \cite{giordano2020modelling,della2020network}, and a heterogeneous population differentiated in terms of age, underlying health conditions \cite{grundel2020much} are being explored. Several recent works have proposed nonlinear model predictive control (MPC) based approaches for NPI computation to minimize fatalities and/or to bound the infected proportion that requires hospitalization below a threshold, both in the single population \cite{kohler2020robust,morato2020optimal} as well as networked settings \cite{carli2020model,grundel2020much}. In a related work \cite{della2020network}, the authors highlight the importance of network structure dependent NPI allocation for effective containment of epidemics. However, the optimization problems formulated in the above MPC based approaches are multi-stage non-convex programs which are challenging to solve, and the complexity increases with the prediction horizon. Furthermore, performance measures such as minimizing the fatality numbers and constraints such as bounds on the number of severe cases are highly sensitive to certain model parameters which may not be accurately known and are likely time-varying.}

In this paper, we investigate a potential approach to minimize the reproduction number (formally defined in Section \ref{section:sir}) in an online manner. As mentioned above, this requires knowledge of the proportion of the susceptible subpopulation at each node of the network (as do MPC based approaches). As a result, the current proportions of susceptible, infected, and recovered subpopulations at different nodes and the parameters that govern the dynamics of the epidemic need to be learned from testing data (number of tests carried out, number of confirmed cases, and number of recoveries) that is made available every day by different jurisdictions.

Earlier work has studied the problem of learning the parameters of the epidemic dynamics from testing data in isolation. In particular, \cite{pare2018analysis} presents a data-driven framework for learning the parameters of the networked SIS epidemic. However, literature on SIR epidemic models have mostly focused on the scalar dynamics (without any network structure) with \cite{pare2020modeling} being a recent exception. In \cite{chen2020scenario}, least squares parameter identification was carried out assuming that the proportion of infected and recovered subpopulation and the daily change in the above (i.e., the state information) is proportional to the respective proportions in the testing data. Analogous assumptions were made in recent works on COVID-19 as well \cite{casella2020can,calafiore2020time}. In \cite{osthus2017forecasting,song2020epidemiological}, Bayesian Markov Chain Monte Carlo (MCMC) techniques were used for estimating the states and parameters. \rev{Even when the infected and susceptible proportions are perfectly known, uniquely identifying the parameters of the networked epidemic dynamic remains a challenging problem due to the structural properties of the epidemic models \cite{massonis2020structural,prasse2020network}.} Furthermore, the above models do not capture the following characteristics inherent in testing data.
\begin{itemize}
\item Due to limited testing capacity, high-risk (e.g., symptomatic or with travel history) individuals are more likely to get tested \cite{cohen2020countries}. 
\item For some diseases, such as COVID-19, individuals often show symptoms a few days after becoming infected (while they continue to infect others).
\end{itemize} 

\subsection{Proposed approach and contributions}
\label{section:overview}

In light of the above research gaps, we propose a closed-loop framework that integrates inferring the states of the epidemic from testing data, learning the parameters of the epidemic dynamics, and optimal resource allocation for the SIR epidemic on networks.\footnote{\rev{We focus on the SIR epidemic dynamic while introducing our framework as it is one of the most fundamental mathematical models of epidemic evolution. We briefly discuss (in Section VIII) how the proposed approach can be extended to account for exposed/asymptomatic compartments and other potential extensions.}}

In Section \ref{section:sir}, we state the discrete-time SIR epidemic model on a network and derive several properties of the state trajectories, including expanding the idea of the reproduction number\footnote{In epidemiology, the reproduction number is the number of infections one infection generates on average over the course of its infectious period. If less than one, the virus quickly dies out.} to the networked SIR model, and showing linear convergence to the equilibrium where no one is infected.

We then argue that one of the key requirements towards controlling the evolution of the infected population is to infer the current proportion of susceptible population from testing data (denoted by $\Omega(k)$). In Section \ref{subsection_testing_risk}, we propose a nonlinear observer model that relates testing data with the underlying states by incorporating the fact that (i) infected individuals are more likely to undergo testing than healthy individuals, and (ii) testing data is reflective of the epidemic state at a past time due to delay ($\tau$) in individuals developing symptoms. We then discuss how to infer the epidemic states from testing data in a Bayesian framework.

\tikzstyle{block} = [draw, fill=green!20, rectangle, align=center, minimum width=2.5em, minimum height=4em]
\tikzstyle{block1} = [draw, fill=blue!20, rectangle,  
    minimum height=1em, minimum width=1em]
\tikzstyle{sum} = [draw, fill=red!20, circle, node distance=1cm]
\tikzstyle{input} = [coordinate]
\tikzstyle{output} = [coordinate]
\tikzstyle{pinstyle} = [pin edge={to-,thick,black}]

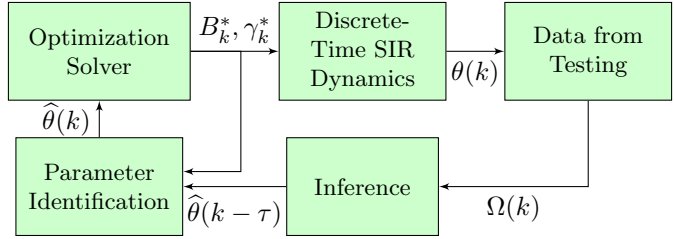
\begin{figure}
\centering
\footnotesize{
\begin{tikzpicture}[auto, node distance=3.5cm,>=latex']

\node [block, text width=2.2cm] (INC) {Optimization \\ Solver};
\node [block, right of=INC, text width=2cm, node distance=3.5cm] (OC) {Discrete-Time SIR Dynamics};
\node [block, below of=INC, text width=2cm, node distance=1.8cm] (SYS) {Parameter \\ Identification};
\node [block, text width=2cm, right of=OC, node distance=3cm] (B) {Data from \\ Testing};
\node [block, text width=1.8cm, below of=OC, node distance=1.8cm] (E) {Inference};

\draw [->] (INC) -- node[name=u1, midway, above] {\small{$B^*_k, \gamma^*_k$}} (OC);
\draw [->] (OC) -- node[name=u2, midway, below] {\small{$\theta(k)$}} (B);
\draw [->] (B) |- node[name=u3, near end, below] {\small{$\Omega(k)$}} (E);
\draw [->] (E) -- node[name=u5, midway, below] {\small{$\widehat{\theta}(k-\tau)$}} (SYS);
\draw [->] (INC) -- ($(INC.0) + (2/3,0)$) |- ($(SYS.0) + (0.2,0.2)$) -- ($(SYS.0) + (0,0.2)$);
\draw [->] (SYS) -- node[name=u4, midway, left] {\small{$\widehat{\theta}(k)$}} (INC);
\end{tikzpicture}}
\caption{Schematic of the proposed framework. Here $\theta(k)$ denotes the epidemic state at time $k$, $\Omega(k)$ denotes the testing data, $\widehat{\theta}(k)$ denotes the inferred epidemic states and $\tau$ denotes the delay factor as explained in Section \ref{section:overview}.}
\label{fig:loop}
\end{figure}

The delay factor $\tau$ necessitates predicting the current state of the epidemic ($\widehat{\theta}(k)$) from the past inferred values ($\widehat{\theta}(k-\tau)$). If the parameters of the epidemic dynamics are known for $\kappa \in [k-\tau,k]$, then the current state can be predicted from the dynamics. In the initial stages (before any interventions are deployed), these parameters may be learned from the available data. Eventually, once the optimal NPIs are deployed, they may  be used for predicting the current state. We present a least squares formulation to identify the virus spread parameters in Section \ref{section:least-square} in order to achieve this goal.

In Section \ref{section:optimization}, we formulate two complementary GPs to evaluate the trade-off between minimizing the cost of the NPIs and the growth-rate. The optimal solutions (denoted $B^*_k$, $\gamma^*_k$, to be made precise in the paper) are then deployed, and at the next iteration, new testing results are used to update the parameters and the predicted states, and the process repeats in an online manner. \rev{Although we do not solve a multi-stage optimization problem, a smaller growth rate, computed by solving the GP leads to the suppression of the infected proportion over a longer time-scale.} The schematic of the proposed scheme is shown in Fig.~\ref{fig:loop}. In Sections \ref{section:numerical} and \ref{section:real_empirical}, we illustrate the performance of the proposed approach via simulations and analysis of real testing data on COVID-19, respectively. Our results highlight the importance of early testing for accurate estimation and control performance, and the risk of a second wave of infection if NPIs are prematurely withdrawn. \rev{We conclude with a discussion on possible extensions and directions for future research in Section \ref{sec:conc}.}
\section{Discrete-Time SIR Epidemic Dynamic on Networks}
\label{section:sir}

In this section, we formally define the discrete-time SIR epidemic dynamic on networks. Let $G = (V,E)$ be a directed network or graph where $V$ is the set of nodes with $|V| = n$ and $E \subseteq V \times V$ is the set of edges. We consider a large-population regime where each node represents a subpopulation (such as a city or a county/district or a state) as opposed to a single individual. We denote the size of subpopulation~$i$ as $N_i$. \rev{Furthermore, we define $[n] := \{1,2,\ldots,n\}$.}

We denote by $\beta_{ij} \in \Rb_{\geq 0}$ the rate at which the infection can spread through the edge $(v_j,v_i) \in E$, \rev{directed from node $v_j$ to node $v_i$}, and by $\gamma_i \in \Rb_{>0}$ the rate at which an individual in subpopulation $i$ recovers from the infection. If two nodes $v_i$ and $v_j$ are not neighbors, then $\beta_{ij} = 0$. We assume that $\beta_{ii} \neq 0$ for every node $v_i$ since the individuals inside subpopulation~$i$ come in contact with each other. We define $V_i := \{j \in V | \beta_{ij} \neq 0\}$ to be the set of in-neighbors of node $v_i$. For better readability, we refer to the in-neighbors of a node as simply neighbors.\footnote{In the literature, the notation $a_{ij}$ is often used to denote the weight or contact pattern between nodes $i$ and $j$ and $\beta_i$ is used to denote the rate at which node $i$ gets infected when it comes in contact with its infected in-neighbors. The formulations are equivalent if we define $\beta_{ij} = \beta_i a_{ij}$.}

The proportions of the subpopulation at node $v_i$ that are susceptible, infected, and recovered at \rev{discrete time instant} $k$ are denoted by $s_i(k), x_i(k),$ and $r_i(k)$, respectively. Accordingly, we have $s_i(k), x_i(k), r_i(k) \in [0,1]$ and $s_i(k)+x_i(k)+r_i(k) = 1$ for all $k$ and $v_i \in V$. We now state the discrete-time evolution of the proportion of nodes in different epidemic states obtained via Euler discretization of the continuous-time SIR dynamic studied in \cite{mei2017dynamics}. Specifically, for a small enough sampling parameter $h > 0$, we have
\begin{subequations}
\begin{align} 
s_i(k+1) & = s_i(k) + h \big[-s_i(k) \sum^n_{j=1} \beta_{ij} x_j(k) \big], \label{eq:sir_si} \\
x_i(k+1) & = x_i(k) + h \big[s_i(k) \sum^n_{j=1} \beta_{ij} x_j(k) - \gamma_i x_i(k) \big], \label{eq:sir_xi}\\
r_i(k+1) & = r_i(k) + h \gamma_i x_i(k),
\label{eq:sir_ri}
\end{align}
\end{subequations}
\rev{where $k$ is discrete and corresponds to continuous time $t = hk$.} The initial conditions need to be specified such that $s_i(0), x_i(0), r_i(0) \in [0,1]$ and $s_i(0) + x_i(0) + r_i(0) = 1$ for every node $v_i$. In vector form the model becomes
\begin{subequations}\label{eq:sir_dt_vector_network}
\begin{align}
s(k+1) & = s(k) - h \diag(s(k)) B x(k), \\
x(k+1) & = x(k) + h \diag(s(k)) B x(k) -h \diag(\gamma) x(k) \nonumber \\ 
& = [\mathbf{I}_n + h \diag(s(k)) B - h \diag(\gamma)] x(k) \nonumber
\\ & =: A_k x(k), \label{eq:sir_dt_x}\\
r(k+1) & = \mathbf{1}_n - s(k+1) - x(k+1), 
\end{align}
\end{subequations}
where $B \in \Rb^{n \times n}$ is the matrix with $(i,j)$-th entry $\beta_{ij}$, $\diag(\gamma)$ is a diagonal matrix whose diagonal entries are the entries of the vector $\gamma$, $\mathbf{1}_n$ is the vector of dimension $n$ with all entries equal to $1$, and $\mathbf{I}_n$ is the identity matrix of dimension $n$. We now assume the following on the parameters such that the dynamic is well behaved.

\begin{assumption}\label{assm:1}
For all $i \in [n]$, let $0<h \gamma_i \leq 1$ and $0 < h \sum^n_{j=1} \beta_{ij} < 1$. The matrix $B$ is irreducible. Furthermore, $s_i(0) > 0$ for all $i\in[n]$.
\end{assumption}

Note that Assumption \ref{assm:1} is satisfied when the sampling parameter $h$ is chosen to be sufficiently small. If $h$ is not sufficiently small (i.e., sampling is infrequent), then it is possible for the states to become negative or exceed $1$, both of which are incompatible with their physical interpretations. The assumption also implies that $A_k$ is an irreducible non-negative matrix.\footnote{If $s_i(k) = 0$ for a node $v_i$ at time $k$, then $A_k$ may no longer be irreducible even if $B$ is irreducible.} Therefore, by the Perron-Frobenius Theorem for irreducible non-negative matrices \cite[Theorem~2.7 and Lemma~2.4]{varga}, $A_k$ has a positive real eigenvalue equal to its spectral radius, which, we denote by~$\lambda^{A_k}_{\max}$. 

We have the following result on the behavior of the discrete-time SIR dynamic under the above assumption. These are analogous to the behavior of the continuous-time dynamic \cite{mei2017dynamics}, but to the best of our knowledge, have not been formally proven in the literature. Our result also strengthens some of the observations in \cite{mei2017dynamics}.

\begin{theorem}\label{prop:main}
Consider the model in \eqref{eq:sir_dt_vector_network} under Assumption \ref{assm:1}. Suppose $s_i(0), x_i(0),$ $r_i(0) \in [0,1]$, $s_i(0) + x_i(0) + r_i(0) = 1$ for all $i \in [n] $ and $x_i(0) > 0$ for some~$i$. Then, for all $k \geq 0$ and $i \in [n]$, 
\begin{enumerate}
\item[1)] $s_i(k), x_i(k), r_i(k) \in [0,1]$ and $s_i(k) + x_i(k) + r_i(k) = 1$,
\item[2)] $s_i(k+1) \leq s_i(k)$, 
\item[3)] $\lim_{k \to \infty} x_i(k) = 0$ for $i \in [n]$,
\item[4)] $\lambda^{A_k}_{\max}$ is monotonically decreasing as a function of $k$,
\item[5)] there exists $\bar{k}$ such that $\lambda^{A_{k}}_{\max}< 1$ for all $k \geq \bar{k}$, and
\item[6)] there exists $\bar{k}$, such that $x_i(k)$ converges linearly\footnote{\rev{Recall the definition that a sequence $\{x(k)\}_{k \geq 0}$ converges linearly to $0$ if
\begin{equation*}
    \lim_{k \to \infty} \frac{\|x(k+1)\|}{\|x(k)\|} <1.
\end{equation*}
}} to $0$ for all~$k \geq~\bar{k}$, $i \in [n]$. 
\end{enumerate}
\end{theorem}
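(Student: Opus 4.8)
The plan is to establish the six claims in the stated order, since each rests on the previous ones. For claims (1) and (2) I would induct on $k$. Summing \eqref{eq:sir_si}--\eqref{eq:sir_ri} makes the cross terms cancel, so $s_i(k+1)+x_i(k+1)+r_i(k+1)=s_i(k)+x_i(k)+r_i(k)$, giving the conservation identity for all $k$. For non-negativity, assuming the coordinates lie in $[0,1]$ at time $k$, the factor $1-h\sum_j\beta_{ij}x_j(k)$ is strictly positive because $h\sum_j\beta_{ij}x_j(k)\le h\sum_j\beta_{ij}<1$ by Assumption~\ref{assm:1}; hence $s_i(k+1)=s_i(k)\big[1-h\sum_j\beta_{ij}x_j(k)\big]\in(0,s_i(k)]$, which simultaneously yields claim (2) and preserves positivity of $s_i$. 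Writing $x_i(k+1)=(1-h\gamma_i)x_i(k)+h s_i(k)\sum_j\beta_{ij}x_j(k)$, both terms are non-negative since $0<h\gamma_i\le 1$, so $x_i(k+1)\ge 0$, and $r_i(k+1)=r_i(k)+h\gamma_i x_i(k)\ge r_i(k)\ge 0$. The upper bounds by $1$ then follow from the conservation identity together with non-negativity of the other two coordinates.

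For claim (3), note that $r_i(k)$ is non-decreasing and bounded above by $1$, hence convergent, so its increments $h\gamma_i x_i(k)=r_i(k+1)-r_i(k)$ tend to $0$; since $h\gamma_i>0$ this gives $x_i(k)\to 0$. The same telescoping gives the stronger summability $\sum_{k\ge 0} x_i(k)<\infty$ for every $i$, which I will need later. Likewise $s_i(k)$ is non-increasing and bounded below, so $s(k)\to s^\infty$ for some $s^\infty\ge 0$, whence $A_k\to A_\infty:=\mathbf{I}_n+h\diag(s^\infty)B-h\diag(\gamma)$. Claim (4) then follows from monotonicity of the Perron--Frobenius eigenvalue: by claim (2), $\diag(s(k+1))\le\diag(s(k))$ entrywise, and since $B\ge 0$ this gives $0\le A_{k+1}\le A_k$ entrywise; as each $A_k$ is irreducible (recall $s_i(k)>0$ for all $k$), the strict comparison theorem for irreducible non-negative matrices yields $\lambda^{A_{k+1}}_{\max}<\lambda^{A_k}_{\max}$ whenever $A_{k+1}\ne A_k$, which holds for every finite $k$ because $x(k)\ne 0$ and $A_k$ has positive diagonal.

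Claim (5) is the heart of the argument. I would first upgrade $s^\infty\ge 0$ to $s^\infty>0$: from $\ln s_i(k+1)-\ln s_i(k)=\ln\!\big(1-h\sum_j\beta_{ij}x_j(k)\big)$ together with the summability $\sum_k x_j(k)<\infty$, the telescoped series converges to a finite value, so $s_i^\infty=s_i(0)\prod_k\big(1-h\sum_j\beta_{ij}x_j(k)\big)>0$. Hence $A_\infty$ is irreducible with a strictly positive left Perron eigenvector $w$. Suppose, for contradiction, $\lambda^{A_\infty}_{\max}\ge 1$. Since $s(k)\ge s^\infty$ gives $A_k\ge A_\infty$ entrywise, $w^\top x(k+1)=w^\top A_k x(k)\ge w^\top A_\infty x(k)=\lambda^{A_\infty}_{\max}\,w^\top x(k)\ge w^\top x(k)$, so $w^\top x(k)$ is non-decreasing; but $w>0$ and $x(0)\gneq 0$ force $w^\top x(k)\ge w^\top x(0)>0$, contradicting $x(k)\to 0$. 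Therefore $\lambda^{A_\infty}_{\max}<1$, and since $\lambda^{A_k}_{\max}\to\lambda^{A_\infty}_{\max}$ by continuity and is monotone, there is $\bar k$ with $\lambda^{A_k}_{\max}<1$ for all $k\ge\bar k$.

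Finally, for claim (6), fix $\bar k$ as in (5) and set $\mu:=\lambda^{A_{\bar k}}_{\max}<1$. For $k\ge\bar k$, since $0\le A_\kappa\le A_{\bar k}$ for all $\kappa\ge\bar k$ and products of ordered non-negative matrices preserve the entrywise order, $0\le x(k)\le A_{\bar k}^{\,k-\bar k}x(\bar k)$. As $\rho(A_{\bar k})=\mu<1$, the right-hand side decays geometrically, giving linear (geometric) convergence of $x(k)$ to $0$; matching the precise ratio in the stated definition amounts to noting that the iterate direction aligns with the Perron eigenvector of the primitive limit $A_\infty$, whose growth factor is $\lambda^{A_\infty}_{\max}<1$. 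The main obstacle throughout is claim (5)---specifically, ruling out zero entries in $s^\infty$ and then converting the hypothesis $\lambda^{A_\infty}_{\max}\ge 1$ into a monotone, hence non-vanishing, Lyapunov quantity $w^\top x(k)$; everything else is either an induction or a direct application of Perron--Frobenius monotonicity.
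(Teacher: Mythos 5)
Your proof is correct, and while it shares the paper's overall architecture (induction for 1)--2), Perron--Frobenius monotonicity for 4), a left-eigenvector contradiction for 5)), it differs in three genuinely useful ways. First, for 3) you telescope the monotone bounded sequence $r_i(k)$ rather than, as the paper does, passing through the limit of the increments of $s(k)$; this yields the summability $\sum_k x_i(k)<\infty$ as a free by-product, which the paper never obtains. Second, and most substantively, in 5) you use that summability to prove $s_i^\infty = s_i(0)\prod_k\bigl(1-h\sum_j\beta_{ij}x_j(k)\bigr)>0$, which eliminates the paper's case $\lim_{k\to\infty}s(k)=\mathbf{0}_n$ altogether (the paper handles that case separately and only sketchily), and your contradiction is cleaner: since $A_k\ge A_\infty$ entrywise, $w^{\top}x(k)$ is non-decreasing and bounded below by $w^{\top}x(0)>0$, so you never need the paper's separate equality-versus-strict-inequality analysis of $w^{*\top}(\epsilon_x(k+1)-\epsilon_x(k))\ge 0$. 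Third, your strict decrease in 4) (via the strict comparison theorem for irreducible non-negative matrices, using $x(k)\ne 0$ and $s_i(k)>0$) strengthens the paper's non-strict conclusion $\rho(A_k)\ge\rho(A_{k+1})$. The one step where you are no more rigorous than the paper is 6): your entrywise domination $0\le x(k)\le A_{\bar k}^{\,k-\bar k}x(\bar k)$ rigorously gives geometric (R-linear) decay, but matching the ratio-limit definition in the theorem's footnote is only sketched via Perron alignment of the iterates; the paper's own display \eqref{eq:conv} is no better, since the inequality $\|A_k x(k)\|\le\lambda^{A_k}_{\max}\|x(k)\|$ fails for general norms. Both arguments become fully rigorous by switching to the weighted norm $\|x\|_w := w^{\top}|x|$ with $w$ the left Perron eigenvector of $A_{\bar k}$, for which $\|x(k+1)\|_w = w^{\top}A_k x(k)\le w^{\top}A_{\bar k}x(k) = \lambda^{A_{\bar k}}_{\max}\|x(k)\|_w$ for all $k\ge\bar k$ --- a one-line fix worth adding to your write-up.
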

\begin{proof} See Appendix \ref{section:proof}.
\end{proof}

Theorem~\ref{prop:main} shows that the model in \eqref{eq:sir_dt_vector_network} is well-defined, the susceptible proportions and the growth rate decrease monotonically over time, the growth rate will eventually be less than $1$, and the infected proportion will go to $0$, in at least linear time for large enough $k$. 

\begin{remark}\label{remark:lambda_max}
The largest eigenvalue $\lambda^{A_k}_{\max}$ is a generalization of the reproduction number to the networked epidemic setting, that is, if $\lambda^{A_k}_{\max} < 1$ the virus quickly dies out. Our result rigorously proves the observation that for epidemics that follow an SIR-type dynamic, such as COVID-19, the reproduction number eventually falls below $1$.
\end{remark}

We use these results as the baseline for the control techniques presented in Section \ref{section:optimization}, which also require inferring the states from the testing data (Section \ref{subsection_testing_risk}) and estimating the spread parameters for forecasting the states (Section \ref{section:least-square}).
\section{Nonlinear Observer Model and Inference from Testing Data}\label{subsection_testing_risk}

As discussed earlier, part of the challenge in controlling the spread of infectious diseases such as COVID-19 is that the true prevalence (i.e., the underlying state) of the disease in a given population is not known. Individuals need to be tested in order to determine if they are infected by the disease or pathogen under consideration. At early stages of the epidemic, many countries and regions do not have enough capacity to test a large number of people since their testing kits are limited. As a result, testing is conducted on individuals who show symptoms (such as fever and shortness of breath associated with COVID-19). However, drawing inference about the underlying spread of the epidemic from such testing data is not straightforward since (i) a large fraction of infected and contagious individuals never show any symptoms, and (ii) similar symptoms are also exhibited by patients who suffer from other related illnesses \cite{hu2020clinical}. Furthermore, testing data on a given day reveals individuals who became infected at least a few days earlier (as opposed to information about the new infections on that day).

\sloppy
In this section, we develop a nonlinear observer model that relates the testing data or observations with the underlying epidemic states ($s(k), x(k), r(k)$). We then propose a Bayesian approach to infer the underlying states from observed testing data. We denote the inferred quantities with the symbol $\ \widehat{\cdot} \ $. 

\subsection{Nonlinear Observer Model of Testing Data}
\label{section:observer}

\rev{The testing data at time $k$ at node $v_i$ is denoted by $\Omega_i(k) = (z_i(k),c_i(k),d_i(k))$ where $z_i(k)$ denotes the number of tests carried out, $c_i(k)$ denotes the number of new confirmed cases and $d_i(k)$ denotes the number of new removed cases (sum of number of new recoveries and new death counts).} We denote the cumulative number of confirmed and removed cases by $\mathtt{C}_i(k):=\sum_{l=0}^k c_i(l)$ and $\mathtt{D}_i(k):=\sum_{l=0}^k d_i(l)$, respectively. Accordingly, the number of known active cases is given by $\mathtt{A}_i(k) := \mathtt{C}_i(k) - \mathtt{D}_i(k)$ and the \textit{daily change} in the number of known/active cases is given by $a_i(k) = c_i(k) - d_i(k)$.

Note that the known active cases are analogous to the proportion of infected individuals. Therefore, it is reasonable to treat the proportion of new infections at time $k$ (i.e., $\frac{c_i(k)}{z_i(k)}$) or the {\it test positivity rate} as representative of the proportion of new infections at node $v_i$. However, as discussed earlier, the confirmed cases at time $k$ are often found to have caught the infection several days prior to being tested, i.e., with a certain delay denoted by $\tau \geq 1$. Therefore, we assume $\frac{c_i(k)}{z_i(k)}$ to be representative of the decrease in the proportion of susceptible individuals at time $k - \tau$, denoted by $-\Delta s_i(k-\tau)$, \rev{where $$\Delta s_i(k) := s_i(k) - s_i(k-1) = -h s_i(k-1) \sum^n_{j=1} \beta_{ij} x_j(k-1).$$ The quantities $\Delta x_i(k)$ and $\Delta r_i(k)$ are defined in an analogous manner.} While most of the prior work assumes $\frac{c_i(k)}{z_i(k)}$ to be proportional to $x_i(k)$ and/or $-\Delta s_i(k)$ \rev{\cite{chen2020scenario, casella2020can}}, we propose a Bayesian framework to model the fact that testing strategies are not uniform, and formally relate $\frac{c_i(k)}{z_i(k)}$ to $\Delta s_i(k-\tau)$.

We assume that a proportion $h_i(k)$ of the population at node $v_i$ \rev{exhibit a higher risk of being infected, potentially because they either exhibit symptoms associated with the disease, are known contacts of confirmed cases, have a travel history in affected regions, or any combinations thereof. For the sake of brevity, we refer this proportion as the {\it high risk} category. Similarly, $l_i(k) := 1 - h_i(k)$ denotes the proportion of population at node $v_i$ who belong to the {\it low risk} category.} 

Now, let $H_i(k)$ be a random variable with $H_i(k) = 1$ (resp. $H_i(k) = 0$) if a randomly \rev{chosen individual of the population in node $v_i$} belongs to the high risk (resp. low risk) category. Accordingly, $\Pb(H_i(k) = 1) = h_i(k)$. We also define a $\{0,1\}$-valued random variable $DX_i(k)$ with $DX_i(k) = 1$ if a randomly chosen individual became infected at time $k$ and $DX_i(k) = 0$, otherwise. We now introduce the following notation to denote certain conditional probabilities of interest. 

In particular, we define
\begin{align*}
& p_{hx, i}(k) := \Pb(H_i(k) = 1 | DX_i(k-\tau) = 1) 
\\ \implies & \Pb(H_i(k) = 0 | DX_i(k-\tau) = 1) = 1- p_{hx, i}(k), \\
& p_{hh, i}(k) := \Pb(H_i(k) = 1 | DX_i(k-\tau) = 0) 
\\ \implies & \Pb(H_i(k) = 0 | DX_i(k-\tau) = 0) = 1- p_{hh, i}(k),
\end{align*}
where $p_{hx, i}(k), p_{hh, i}(k) \in [0,1]$. \rev{The parameter $p_{hx, i}(k)$ denotes the proportion of individuals who became infected at $k-\tau$ and belong to the high risk category at time $k$, i.e., the proportion of infected individuals who show symptoms within $\tau$ time steps of being infected. Similarly, $1-p_{hx, i}(k)$ captures the proportion of infected individuals who remain asymptomatic after $\tau$ steps of becoming infected. The parameter $p_{hx, i}(k)$ depends on the characteristics of the epidemic and the population (e.g., age, prevalence of comorbidity) at node $v_i$.} The parameter $p_{hh, i}(k)$ captures the proportion of individuals who belong to the high risk category, but did not become infected $\tau$ time steps earlier.  

We now apply Bayes' law to compute the probability of a randomly chosen high risk individual being infected $\tau$ time steps earlier. \rev{For better readability, we define $\bar{k} := k - \tau$.} Specifically, we compute
\begin{align}
\Pb(DX_i(\bar{k}) = 1 | H_i(k) = 1) & = \frac{\Pb(H_i(k) = 1 | DX_i(\bar{k}) = 1) \Pb(DX_i(\bar{k}) = 1)}{\Pb(H_i(k) = 1)} \nonumber
\\ & \quad = \frac{p_{hx,i}(k)(-\Delta s_i(\bar{k}))}{p_{hx, i}(k)(-\Delta s_i(\bar{k})) + p_{hh, i}(k)(1-(-\Delta s_i(\bar{k})))} \nonumber
\\ & \quad =: p_{xh,i}(k). \label{eq:pxhi_bayes}
\end{align}
Similarly, the probability of a randomly chosen low risk individual being infected $\tau$ time steps earlier is
\begin{align}
\Pb(DX_i(\bar{k}) = 1 | H_i(k) = 0) & = \frac{\Pb(H_i(k) = 0 | DX_i(\bar{k}) = 1) \Pb(DX_i(\bar{k}) = 1)}{\Pb(H_i(k) = 0)} \nonumber
\\ & \quad = \frac{-\Delta s_i(\bar{k})(1-p_{hx, i}(k))}{-\Delta s_i(\bar{k})(1-p_{hx, i}(k)) + (1-p_{hh, i}(k))(1+\Delta s_i(\bar{k})))} \nonumber
\\ & \quad =: p_{xl,i}(k). \label{eq:pxli_bayes}
\end{align}
Furthermore, both $p_{xh,i}(k)$ and $p_{xl,i}(k)$ are monotonically increasing in $-\Delta s_i(\bar{k})$.

We now relate the number of tests and confirmed positive cases with the underlying states. At time $k$, suppose the authorities at node $v_i$ decide to carry out $z_{h, i}(k) \in [0,z_i(k)]$ number of tests on high risk individuals and $z_{l, i}(k) = z_i(k) - z_{h, i}(k)$ number tests on low risk individuals. Assuming that testing is accurate (with false positive and false negative rates being $0$),\footnote{Modifying the proposed Bayesian approach to incorporate inaccuracy in testing data is beyond the scope of this paper and will be explored in follow up work.} the confirmed cases among those tested is the sum of the number of confirmed cases among tested high risk individuals and among tested low risk individuals. Accordingly, we model
\begin{align}
c_i(k) & \sim \mathtt{Bin}(z_{h, i}(k),p_{xh,i}(k)) + \mathtt{Bin}(z_{l, i}(k),p_{xl,i}(k)), \label{eq:cik_bin}
\end{align}
where $\mathtt{Bin}(n,p)$ denotes the Binomial distribution with parameters $n$ and $p$. 

We now relate the observed number of recoveries ($d_i(k)$) with the underlying states in an analogous manner. Recall that under the SIR epidemic dynamics, the change in the proportion of recovered individuals, $\Delta r_i(k) = r_i(k) - r_i(k-1) = h \gamma_i x_i(k-1)$. In the observed data, the quantity analogous to the change in the fraction of recovered individuals is $d_i(k)$, which denotes the number of known removed cases among the known active cases $\mathtt{A}_i(k-1)$. Accordingly, we assume
\begin{equation}\label{eq:nonlin_obs_d}
d_i(k) \sim \mathtt{Bin}(\mathtt{A}_i(k-1), h\gamma_i).
\end{equation}
In other words, each known active case recovers with probability \rev{$h\gamma_i$}. When the number of active cases is large, $d_i(k)$ is approximately equal to $h\gamma_i \mathtt{A}_i(k-1)$.  


\subsection{Inference from Testing Data}

The above analysis formally relates the observed quantities with the underlying epidemic states. If the parameters $p_{hx, i}(k)$ and $p_{hh, i}(k)$ and the number of confirmed cases among the tested high and low risk populations are known, then maximum likelihood inference of $-\Delta s_i(k-\tau)$ can be computed without much difficulty as both $p_{xh, i}(k)$ and $p_{xl, i}(k)$ are monotonically increasing in $-\Delta s_i(k-\tau)$. 

Nevertheless, in the rest of the section, as well as in our numerical results, we focus on the practically motivated special case where only high risk individuals undergo testing, i.e., $z_{h, i}(k) = z_i(k)$. Note that this has been the practice in many jurisdictions in the world during the ongoing COVID-19 pandemic \cite{cohen2020countries}. Following \eqref{eq:pxhi_bayes}, we have
\begin{align}
\frac{c_i(k)}{z_i(k)} = \widehat{p_{xh, i}}(k) & = \left[ 1 + \frac{p_{hh, i}(k)}{p_{hx, i}(k)}  \bigg[\frac{1}{-\widehat{\Delta s}_{i}(k-\tau)}-1\bigg] \right]^{-1} \nonumber 
\\ & = \left[ 1 + \frac{1}{\alpha_i(k)}  \bigg[\frac{1}{-\widehat{\Delta s}_{i}(k-\tau)}-1\bigg] \right]^{-1}, \label{eq:state2output}
\end{align}
where $\alpha_i(k) := \frac{p_{hx, i}(k)}{p_{hh, i}(k)}$, i.e., the ratio of the probability of an infected individual being high risk (or undergoing testing) and the probability of a healthy individual being high risk (or undergoing testing). \rev{In practice, it is more likely that an infected individual undergoes testing, and as a result, it is reasonable to assume that $\alpha_i(k) \geq 1$. In addition, $\alpha_i(k)$ potentially depends on the number of tests carried out in a given day. If fewer tests are carried out, it is likely that these tests are conducted on individuals who are sick or are isolated through contact tracing. As a result, $\alpha_i(k)$ is potentially large when $z_i(k)$ is relatively small, and vice versa.} Note that if $\alpha_i(k)=1$, i.e., a healthy person is equally likely to undergo testing compared to an infected person, then we have $-\widehat{\Delta s}_{i}(k-\tau)  = \frac{c_i(k)}{z_i(k)}$.


Given the testing data and assuming that we know $\alpha_i(k)$, we now discuss how to infer the underlying states. Suppose the daily testing data $\Omega_i(k)$ is available to us over the time interval $k \in [T_1+\tau, T_2+{\tau}]$ at node $v_i$. 

\rev{Following \eqref{eq:state2output},} we define the inferred fraction of new infections at node $v_i$ as 
\begin{align}
-\widehat{\Delta s}_{i}(k) & := \Big[ 1 - \alpha_i(k+\tau) + \alpha_i(k+\tau) \frac{z_i(k+\tau)}{c_i(k+\tau)}\Big]^{-1}, \label{eq:inf_inf_risk}
\end{align}
for $k \in [T_1, T_2]$. Similarly, following \eqref{eq:nonlin_obs_d}, the change in the proportion of recovered individuals is 
\begin{equation}\label{eq:inf_deltar}
\widehat{\Delta r}_{i}(k) := \frac{d_i(k)}{\mathtt{A}_i(k-1)} \widehat{x}_i(k-1) = \frac{d_i(k) \widehat{x}_i(k-1)}{\mathtt{C}_i(k-1)-\mathtt{D}_i(k-1)},
\end{equation}
for $k \in [T_1+\tau,T_2]$. In the pathological case with $c_i(k) = 0$ and $\mathtt{A}_i(k-1) = 0$, we assume that $\widehat{\Delta s}_{i}(k) = 0$ and $\widehat{\Delta r}_{i}(k) = 0$, respectively. \rev{Note that the inferred quantities defined in \eqref{eq:inf_inf_risk} and \eqref{eq:inf_deltar} are accurate only when $c_i(k+\tau)$, $z_i(k+\tau)$, and $\mathtt{A}_i(k-1)$ are sufficiently large.} \rev{With the change in inferred proportions stated in \eqref{eq:inf_inf_risk} and \eqref{eq:inf_deltar}, we define the inferred states recursively for $k \in [T_1,T_2]$ as:
\begin{align*}
\widehat{s}_i(k) & = \widehat{s}_i(k-1) + \widehat{\Delta s}_{i}(k), 
\\ \widehat{x}_i(k) & = \widehat{x}_i(k-1) - \widehat{\Delta s}_{i}(k) - \widehat{\Delta r}_{i}(k) = \left[1-\frac{d_i(k)}{\mathtt{A}_i(k-1)}\right] \widehat{x}_i(k-1) - \widehat{\Delta s}_{i}(k),
\end{align*}
with $\widehat{s}_i(T_1-1)$ and $\widehat{x}_i(T_1-1)$ being the initial inferred quantities. Note that for $k < T_1 + \tau$, we have $d_i(k) = 0$, which implies $\widehat{\Delta r_i}(k) = 0$ and hence $\widehat{x}_i(k) = \widehat{x}_i(k-1) - \widehat{\Delta s}_{i}(k)$.}

\rev{We can equivalently state the inferred state at time $k \in [T_1,T_2]$ in terms of the initial inferred quantities as:
\begin{align}
\widehat{s}_i(k) & = \widehat{s}_i(T_1-1) + \sum^k_{j = T_1} \widehat{\Delta s}_{i}(j), \label{eq:inf_s_final}
\\ \widehat{x}_i(k) & = \widehat{x}_i(T_1-1) \prod^k_{j=T_1} \left[1-\frac{d_i(j)}{\mathtt{A}_i(j-1)}\right] - \widehat{\Delta s}_{i}(k) - \sum^{k-1}_{j=T_1} \left[ \widehat{\Delta s}_{i}(j) \prod^k_{l=j+1} \left( 1 - \frac{d_i(l)}{\mathtt{A}_i(l-1)} \right) \right]. \label{eq:inf_x_final}
\end{align}
In other words, the inferred trajectory of susceptible and infected proportions are linear functions of the initial inferred proportions.}

\rev{Note further that the inferred quantities are nonlinear in the $\alpha_i(k)$ parameters (via $\widehat{\Delta s}_{i}(k)$ defined in \eqref{eq:inf_inf_risk}). In the following section, we first present a least squares optimization problem to estimate the initial inferred state and the parameters of the SIR epidemic model assuming $\alpha_i(k)$ parameters are known. Subsequently, we discuss how to possibly infer $\alpha_i(k)$ parameters from testing data.}


\section{Epidemic Parameter Identification and Forecasting}
\label{section:least-square}

Recall from prior discussion that optimal resource allocation to minimize the growth rate of the disease, or the cost of NPIs, for the SIR epidemic, requires knowledge of the current proportion of susceptible individuals at each node of the network. However, due to the delay $\tau$, testing data up to time $k$ only suffice to infer the epidemic states up to $k-\tau$. As a result, we need to predict the current proportion of susceptible individuals using the inferred states up to $k-\tau$. Prediction, or forecasting the future trajectory, of the epidemic is also of independent interest (beyond the optimal resource allocation problem). If the parameters of the epidemic dynamics are known, then those can be used to predict the current state of the epidemic from the inferred states up to $k-\tau$. 

However, at the early stages of the epidemic, there is substantial uncertainty regarding the values of $\beta_{ij}$ and $\gamma_i$ parameters. \rev{In this section, we extend the least squares estimation technique proposed in \cite{chen2020scenario,prasse2020network} to learn the unknown $\beta_{ij}$ and $\gamma_i$ parameters in the networked setting. First we consider the case when the initial states are assumed to be known.}

\subsection{Estimation with given initial inferred states}

\rev{Recall from the above discussion that if the initial inferred states are known, then the inferred states are uniquely determined for $k \in [T_1,T_2]$ following \eqref{eq:inf_s_final} and \eqref{eq:inf_x_final}. If, for instance, testing data is available from the onset of the epidemic at node $v_i$, i.e., for the initial few time instances, there have been no infections and recoveries, then it is reasonable to assume that the initial proportion of infected individuals is $0$ and susceptible individuals is $1$.}

\rev{We now formulate a suitable least squares problem to learn the infection and recovery rate parameters from inferred states. Recall from the definition of the SIR epidemic dynamic in \eqref{eq:sir_si} that $\Delta s_i(k+1) = h \big[-s_i(k) \sum^n_{j=1} \beta_{ij} x_j(k)\big]$. Note further that if the network topology, i.e., the presence or absence of edges between different nodes are known, then we have $\beta_{ij} = 0$ if $(v_j,v_i) \notin E$. As a result, the dimension of the infection rate parameters is $\Rb^{|E|}$. Similarly, according to \eqref{eq:sir_ri}, $\Delta r_i(k+1) = h \gamma_i x_i(k)$. We use the inferred state information in the above relationships to learn the infection and recovery rate parameters $\{\{\beta_{ij}\}_{(i,j) \in E}, \{\gamma_i\}_{i \in [n]}\} \in \Rb^{n+|E|}_{+}$ by solving:
\begin{align}
\min_{\gamma \in \Rb^n_{+}, \beta \in \Rb^{|E|}_{+}} & \, \, \sum^n_{i=1} \sum^{T_2}_{k=T_1}  \left(\! 1 + h \frac{\widehat{s}_i(k-1)}{\widehat{\Delta s}_{i}(k)} \sum_{j \in V_i} \beta_{ij}  \widehat{x}_j(k-1) \right)^2 \nonumber
\\ & + \quad \!\! \sum^n_{i=1} \sum^{T_2}_{k=T_1+\tau} \! \! \left(1 - h \gamma_i \frac{\widehat{x}_i(k-1)}{\widehat{\Delta r}_{i}(k)}\right)^2. \label{eq:ls_opt}
\end{align}}

\rev{Since all the inferred quantities are known, the above is a linear regression problem with the cost function being convex in the decision variables, and thus, the problem can be solved efficiently. In \eqref{eq:ls_opt}, we set the response variable of the regression problem to $1$ by normalizing the inferred states with change in inferred states for better numerical conditioning.} If additional information such as bounds on the $\beta_{ij}$ parameters are available, those may be incorporated as constraints in \eqref{eq:ls_opt}. We denote the optimal solutions of \eqref{eq:ls_opt} as $\widehat{\gamma}_i$ and $\widehat{\beta}_{ij}$.

\subsection{Estimating parameters as well as initial inferred states}
\label{section:least-squareb}

\rev{While the optimization problem in \eqref{eq:ls_opt} is convex, the predicted states (using learned $\widehat{\gamma}_i$ and $\widehat{\beta}_{ij}$) will likely be very different from true states if the assumed initial inferred states are erroneous. It is indeed the case when testing is carried out after the disease has already spread for some time.}

\rev{In such cases, we propose to treat the initial inferred states ($\widehat{s}_i(T_1-1), \widehat{x}_i(T_1-1)$) as decision variables. For brevity of notation, let $\tilde{s}^0_i = \widehat{s}_i(T_1-1)$ and $\tilde{x}^0_i = \widehat{x}_i(T_1-1)$. Let the inferred states given by \eqref{eq:inf_s_final} and \eqref{eq:inf_x_final} with $\tilde{s}^0_i$ and $\tilde{x}^0_i$ being initial conditions be denoted by $\widehat{s}_i(k)(\tilde{s}^0_i)$ and $\widehat{x}_i(k)(\tilde{x}^0_i)$ with a slight abuse of notation. We now solve the following problem:
\begin{align}
\min_{\substack{\tilde{s}^0, \tilde{x}^0, \gamma \in \Rb^n_{+} \\ \beta \in \Rb^{|E|}_{+}}} & \, \, \sum^n_{i=1} \sum^{T_2}_{k=T_1}  \left(\! 1 + h \frac{\widehat{s}_i(k-1)(\tilde{s}^0_i)}{\widehat{\Delta s}_{i}(k)} \sum_{j \in V_i} \beta_{ij}  \widehat{x}_j(k-1)(\tilde{x}^0_i) \right)^2 \nonumber
\\ & + \!\! \sum^n_{i=1} \sum^{T_2}_{k=T_1+\tau} \! \! \left(1 - h \gamma_i \frac{\widehat{x}_i(k-1)(\tilde{x}^0_i)}{\widehat{\Delta r}_{i}(k)}\right)^2 + w \!\! \sum^n_{i=1} (\tilde{s}^0_i-1)^2 \nonumber
\\ \text{s.t.} & \quad  \tilde{s}^0_i \in [0,1], \tilde{x}^0_i \in [0,1], \forall i \in [n],  \label{eq:ls_opt_new}
\\ & \quad \widehat{s}_i(k)(\tilde{s}^0_i), \widehat{x}_i(k)(\tilde{x}^0_i) \in [0,1], \forall i \in [n], k \in [T_1,T_2], \nonumber
\\ & \quad \widehat{s}_i(k)(\tilde{s}^0_i) + \widehat{x}_i(k)(\tilde{x}^0_i) \leq 1, \forall i \in [n], k \in [T_1,T_2], \nonumber
\end{align}
where the last three constraints guarantee that the state trajectories from the initial inferred states following \eqref{eq:inf_s_final} and \eqref{eq:inf_x_final} are consistent with their physical interpretations. These constraints are affine in the decision variables. The last term in the cost function is a penalty term that keeps the initial proportion of susceptible individuals close to $1$ as is often the case. The parameter $w$ is a weighting factor.} 

\rev{In contrast with \eqref{eq:ls_opt}, the cost function in \eqref{eq:ls_opt_new} is non-convex due to the product terms $\widehat{s}_i(k-1)(\tilde{s}^0_i)$ and $\widehat{x}_j(k-1)(\tilde{x}^0_i)$ both of which are affine functions of $\tilde{s}^0_i$ and $\tilde{x}^0_i$, respectively. Therefore, problem \eqref{eq:ls_opt_new} is often computationally challenging to solve. However, it offers significant benefits in practice as it solves for the initial inferred states as well as parameters that lead to feasible state trajectories.}

Both \eqref{eq:ls_opt} and \eqref{eq:ls_opt_new} assume that the parameters $\{\{\gamma_i\}_{i \in [n]}, \{\beta_{ij}\}_{(i,j) \in E}\} \in \Rb^{n+|E|}_{+}$ do not significantly change over the interval $[T_1,T_2]$. If the parameters change due to NPIs by the authorities, then the above formulation can be suitably modified to learn the epidemic parameters both before and after the imposition of NPIs by considering suitable sub-intervals during which different NPIs were put into place. Other approaches, such as \cite{chen2020scenario}, define $\beta_{ij}$'s to be parametric functions of NPIs. \rev{These settings can also be handled by suitably modifying the above formulation. For longer time intervals, parameter estimation can be carried out over a moving horizon of suitable length.}

\subsection{Learning the $\alpha$ hyper-parameter}
\label{sec:alpha_tuning}

\rev{The discussion on inference and parameter estimation thus far has assumed that the hyper-parameters $\alpha_i(k)$ are known. Recall that $\alpha_i(k)$ captures the ratio of the probability of an infected individual undergoing testing and a healthy individual undergoing testing.}

\rev{Note from \eqref{eq:inf_inf_risk} that the inferred proportion of new infections, $-\widehat{\Delta s}_i(k)$, is inversely proportional to $\alpha_i(k+\tau)$. Thus, if our conjectured value of $\alpha$ is much smaller than the true $\alpha$ value, then $-\widehat{\Delta s}_i(k)$ would be much larger than its true value. This would lead to $\widehat{s_i}(k)$ becoming negative for some $k \in [T_1,T_2]$. In contrast, if the conjectured value of $\alpha$ is much larger than the true $\alpha$ value, then the inferred states would be much smaller, possibly leading to total number of new infections being smaller than the known confirmed cases, i.e., $-\widehat{\Delta s}_i(k)N_i < c_i(k)$, where $N_i$ is the population size at node $v_i$. Thus, it is not difficult to rule out extreme values of $\alpha$.}

\rev{Once a sensible range for this parameter is determined, we propose to solve the optimization problem in \eqref{eq:ls_opt_new} (or \eqref{eq:ls_opt}) for different values of $\alpha$ over this range at suitable granularity. The value of $\alpha$ which leads to the smallest optimal cost is likely to be close to its true underlying value. In Section \ref{section:numerical}, we numerically evaluate the above approach for different values of $\alpha$ (assuming that all nodes have the same value of $\alpha$) and different network sizes, and show that the value of $\alpha$ that leads to the smallest optimal cost is almost always close to the true value of $\alpha$ that is used to generate the testing data.}

\subsection{Forecasting}

The learned parameters of the epidemic dynamics from \rev{\eqref{eq:ls_opt} or \eqref{eq:ls_opt_new}, denoted $\widehat{\gamma}_i$ and $\widehat{\beta}_{ij}$, together with appropriate initial inferred states}, can now be used to predict the current and future state of the epidemic using the inferred states as the initial conditions. In order to avoid introducing additional notation, we use $\ \widehat{\cdot} \ $ to also denote the predicted states. Specifically, for $k \geq T_1$, we compute the {\it predicted states} as
\begin{subequations}\label{eq:for}
    \begin{align}
        \widehat{s}(k+1) &= \widehat{s}(k) - h \diag(\widehat{s}(k)) \widehat{B} \widehat{x}(k), \label{eq:for_s}\\
        \widehat{x}(k+1) &=\widehat{x}(k) + h \diag(\widehat{s}(k)) \widehat{B} \widehat{x}(k) - h \diag(\widehat{\gamma}) \widehat{x}(k),
    \end{align}
\end{subequations}
where $\widehat{B}$ is the matrix with $(i,j)$-th entry $\widehat{\beta}_{ij}$ and $\diag(\widehat{\gamma})$ is a diagonal matrix with diagonal entries being $\widehat{\gamma}_i$. 

\begin{remark}\label{remark:prasse}
\rev{In a recent work \cite{prasse2020network}, the authors show that it is often not possible to learn the true values of the infection rate parameters ($\beta_{ij}$'s) by solving \eqref{eq:ls_opt} even when the inferred states are accurate due to the structure of the regression problem. Nevertheless, the predicted states using the (incorrect) learned values of the parameters closely align with the true state trajectory. Our empirical results are consistent with the above observation by \cite{prasse2020network}.}
\end{remark}

\begin{remark} 
The formulation in \eqref{eq:ls_opt} (and \eqref{eq:ls_opt_new}) admits a separable structure and as a result, each node $v_i$ can learn their respective $\beta_{ij}$ and $\gamma_i$ values using local information from their own testing data and obtaining the estimates $\widehat{x}_j(k)$ from their neighbors \rev{(assuming that all nodes are aware of their respective $\alpha$ values). However, the hyper-parameter tuning at each node depends on the accuracy of estimates received from the neighboring nodes which makes the problem challenging. We motivate this problem as a promising avenue for future research in Section \ref{sec:conc}.}
\end{remark}

In the following section, we formulate optimization problems to compute optimal NPIs to control the spread of the epidemic using the predicted state from \eqref{eq:for}.


\section{Optimal Resource Allocation via Geometric Programming}\label{section:optimization}

Recall from Theorem~\ref{prop:main} that the growth rate of the infected fraction of the population at time $k$ is given by
$\lambda^{A_k}_{\max}$ (the largest eigenvalue of the matrix $\mathbf{I}_n + h \diag(s(k)) B - h \diag(\gamma)$). Since $A_k$ is a function of $s(k)$, in the absence of perfect knowledge of $s(k)$, we use the inferred/predicted value of $s(k)$, denoted by $\widehat{s}(k)$ (given in \eqref{eq:for_s}). The goal of the social planner is to control the spread of the epidemic by choosing the recovery or curing rates, i.e., the $\gamma_i$ parameters (which, for instance, correspond to deploying a larger number of healthcare personnel and/or medical equipment) and the infection or contact rates, i.e., the $\beta_{ij}$ parameters (which correspond to imposing social distancing or lock-down measures).

We present two geometric programming formulations to aid the social planner's decision-making with regards to NPIs. First, we consider the problem of minimizing the instantaneous growth rate $\lambda^{A_k}_{\max}$ by optimally allocating the NPIs subject to budget constraints. Following analogous arguments in \cite{han2015data,preciado2014optimal}, this problem is equivalent to a GP given by:
\begin{subequations}\label{eq:GP_SIR}
\begin{align}
\min_{\lambda, \bar{\gamma}, w, \beta} & \, \, \lambda \\
\text{s.t. } & \, \, \sum_{j \in V_i} h \widehat{s}_i(k) \beta_{ij} w_j/w_i + \bar{\gamma}_i \leq \lambda \quad \forall i \in [n], \label{eq:GP_SIR_sep} \\
& \, \, \sum_{(v_i,v_j) \in E} f_{ij}(\beta_{ij}) \leq C_1, \label{eq:GP_SIR_bud} \\
& \, \, \sum^n_{i=1} g_i(\bar{\gamma}_i) \leq C_2, \label{eq:GP_SIR_bud1} \\
& \, \, \bar{\gamma}_l \leq \bar{\gamma} \leq \bar{\gamma}_u, \quad \beta_l \leq \beta \leq \beta_u, \label{eq:GP_SIR_bound2} \\
& \lambda \in \Rb_{+}, \bar{\gamma} \in \Rb^n_{+}, w \in \Rb^n_{+}, \beta \in \Rb^{|E|}_{+},
\end{align}
\end{subequations}
where the new variable $\bar{\gamma}_i := 1-h\gamma_i$ is used so that the constraints in \eqref{eq:GP_SIR_sep} remain posynomials. The constraints in \eqref{eq:GP_SIR_bound2} correspond to bounds on the infection and recovery rates \rev{where $\beta_u, \beta_l \in \Rb^{|E|}_{+}$ denote the upper and lower bounds on the infection rates on the edges of the network and $\bar{\gamma}_u, \bar{\gamma}_l \in \Rb^n_{+}$ denote the upper and lower bounds on $\bar{\gamma}$, respectively.} The functions $f_{ij}$ and $g_i$ are posynomial cost functions of NPIs \rev{that are non-increasing} in the arguments, and the constraints in \eqref{eq:GP_SIR_bud} and \eqref{eq:GP_SIR_bud1} are budget constraints on NPIs with $C_1$ and $C_2$ being the budgets for infection rates and recovery rates, respectively. 

The optimal value $\lambda^*_k$ corresponds to the largest eigenvalue of $A_k$ with parameters $\gamma^*_k$ and $B^*_k$; the latter denote the optimal recovery rates and infection rate matrix, respectively. The optimal $w^*_k$ is the eigenvector corresponding to $\lambda^*_k$. Furthermore, $\lambda^*_k$ is the smallest growth rate that can be achieved given the budget constraints. The above problem is solved repeatedly in an online manner. At time step $k+1$, we again obtain $\widehat{s}(k+1)$ via feedback, and solve \eqref{eq:GP_SIR} with $A_{k}$ replaced by $A_{k+1}$. 

A problem complementary to \eqref{eq:GP_SIR} is to minimize the cost of NPIs subject to the constraint that the growth rate is bounded by $\lambda_k$. This problem is given by:
\begin{subequations}\label{eq:GP_SIR2}
\begin{align}
\min_{\bar{\gamma}, w, \beta} & \, \, \Psi(\beta,\bar{\gamma}) :=\sum_{(v_i,v_j) \in E} f_{ij}(\beta_{ij}) + \sum^n_{i=1} g_i(\bar{\gamma}_i) \\
\text{s.t. } & \, \, \sum_{j \in V_i} h \widehat{s}_i(k) \beta_{ij} w_j/w_i + \bar{\gamma}_i \leq \lambda_k \quad \forall i \in [n], \label{eq:GP_SIR2_sep} \\
& \, \, \bar{\gamma}_l \leq \bar{\gamma} \leq \bar{\gamma}_u, \quad \beta_l \leq \beta \leq \beta_u, \label{eq:GP_SIR2_bound2} \\
& \bar{\gamma} \in \Rb^n_{+}, w \in \Rb^n_{+}, \beta \in \Rb^{|E|}_{+}.
\end{align}
\end{subequations}
The above formulation corresponds to imposing NPIs in a cost-optimal manner while ensuring that the reproduction number stays below a certain threshold. 

\begin{remark}
Note that in Section \ref{section:least-square} we estimated the virus spread parameters while in this section we allow a social planner to control the spread of the virus by setting these parameters. There are two ways to interpret it: 1) before the social planner is able or decides to exert efforts to mitigate the spread of the virus, it must estimate the state of the system from testing data which, given the delay $\tau$, requires estimating the spread parameters, and 2) even after the social planner implements preventative measures, the population may not follow the restrictions or they may be less effective (or more extreme) than needed; therefore repeatedly estimating the actual parameters is necessary.
\end{remark}

We now evaluate the performance of the proposed framework via simulations and analysis of real data.
\section{Empirical Evaluation: Synthetic Data}\label{section:numerical}

\subsection{Inference from testing data}

\rev{We first illustrate the effectiveness of the proposed inference and parameter identification methods. Recall that the hyper-parameter $\alpha$, defined as the ratio of the probability of an infected individual undergoing testing and the probability of a healthy individual undergoing testing, plays a key role in our inference scheme. We now show how it can be learned from testing data following the procedure discussed in Section \ref{sec:alpha_tuning}.}

\begin{figure}
\centering
    \includegraphics[scale=0.6]{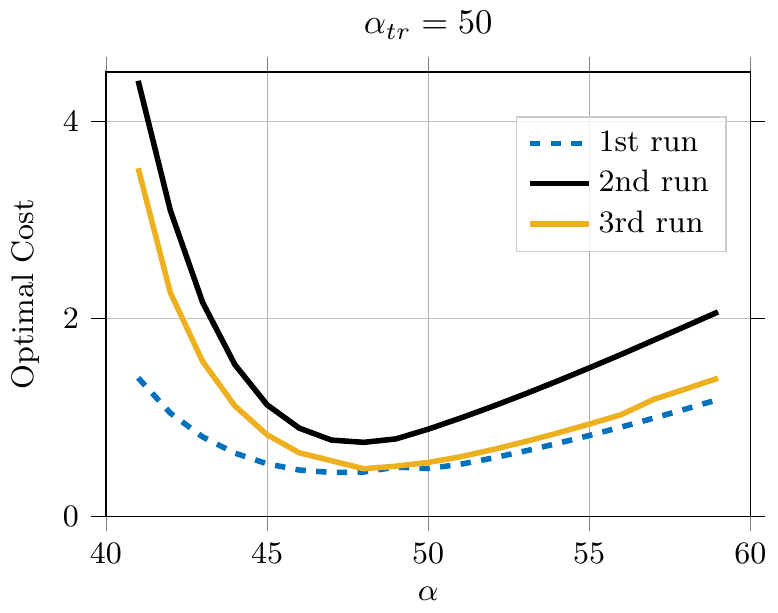}    
    \caption{Variation of optimal least squares cost \eqref{eq:ls_opt_new} for different values of $\alpha$ for $n = 5$ and $\alpha_{tr} = 50$.}
    \label{fig:opt_alpha}
\end{figure}

\rev{We generate random network topologies on $n$ nodes where there is an edge between any two nodes with probability $0.25$. For each such network, we choose infection rates to be uniformly distributed between $0.03$ and $0.05$ and the recovery rate for each node to be uniformly distributed between $0.01$ to $0.03$. We let the SIR epidemic state trajectory evolve following \eqref{eq:sir_si} with an initial infected proportion of $0.01$ at a few randomly chosen nodes. We then generate synthetic testing data according to our observer model defined in \eqref{eq:cik_bin} and \eqref{eq:nonlin_obs_d} with the number of daily tests being uniformly distributed between $2000$ and $2050$ at each node and the (true) value of $\alpha_{tr}$ being $10$, $50$, and $100$.}

\begin{figure*}
\centering
    \includegraphics[scale=0.6]{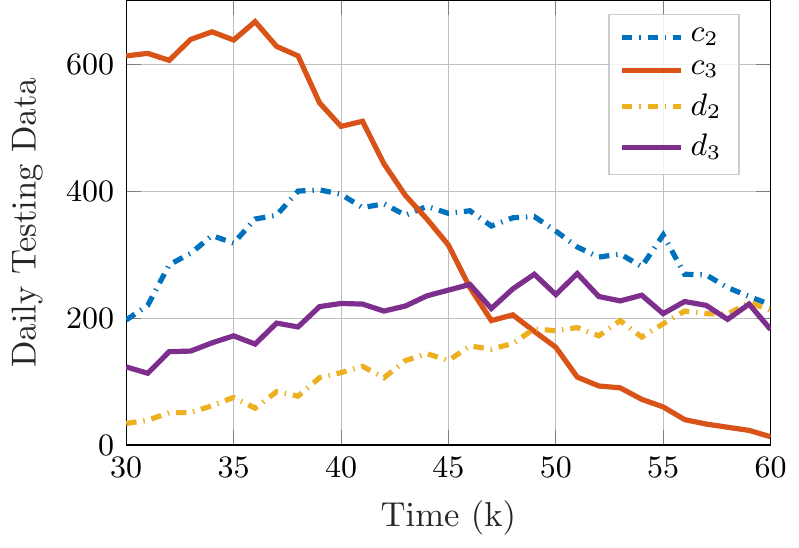}
    ~\includegraphics[scale=0.6]{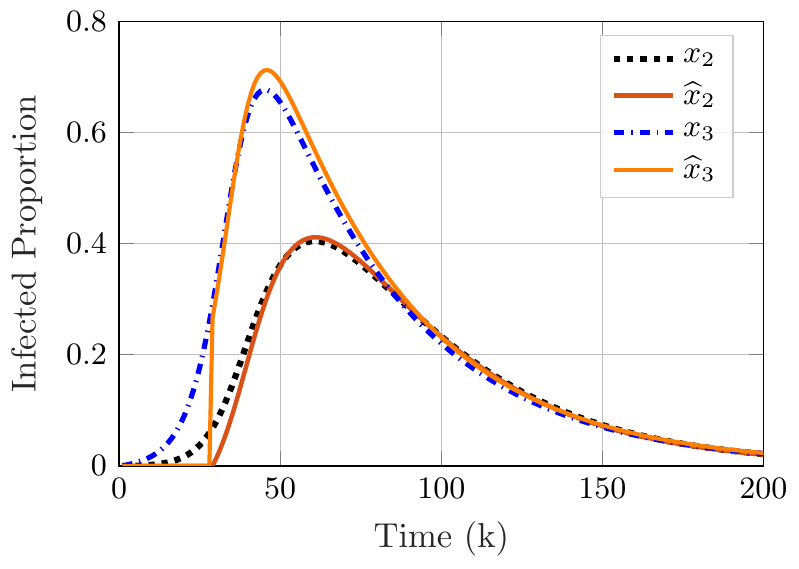}
    ~\includegraphics[scale=0.6]{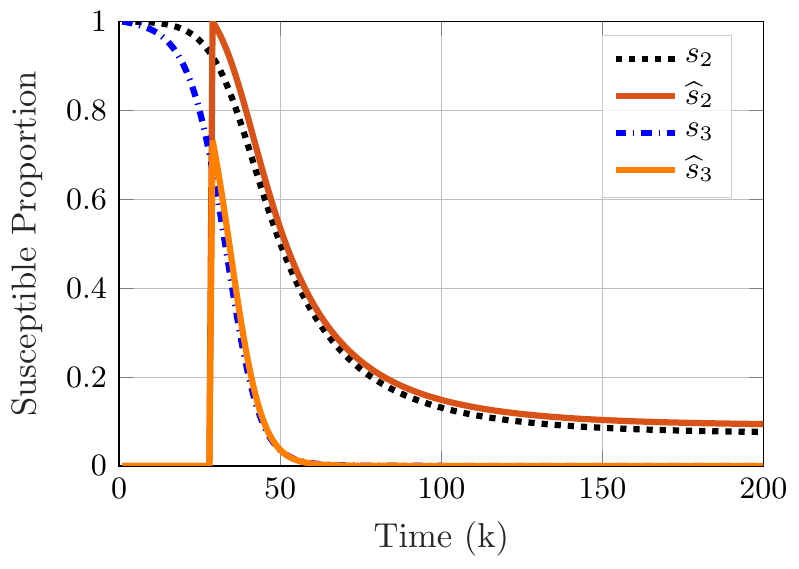}
    \caption{\small Synthetic testing data and comparison of true and predicted infected and susceptible proportions at two different nodes.}
    \label{fig:learned_vs_predicted}
\end{figure*}

\rev{We now use the synthetic testing data generated above to learn the (true) value of $\alpha_{tr}$ as well as the parameters of the epidemic model. We use testing data from $T_1 = 30$ and $T_2 = 60$, i.e., when the testing data is available once the disease has spread significantly among the population. We solve the (non-convex) least squares problem formulated in \eqref{eq:ls_opt_new} using optimization package YALMIP \cite{lofberg2004yalmip} with `fmincon' solver in MATLAB environment on a Desktop computer with an Intel Xeon 3.4GHz processor and 32GB of RAM for different integer values of $\alpha$ over the range $0.5\alpha_{tr}$ to $2\alpha_{tr}$ and examine the optimal cost value.} 

\rev{In Table \ref{table:learn_alpha}, we report the mean and standard deviation of the values of $\alpha$ that achieve the smallest optimal cost for different values of $\alpha_{tr}$ over $10$ independent runs; each run with a different randomly chosen network topology and epidemic parameters. We consider networks with $n = 5$ and $n=10$.\footnote{\rev{We note that computation time was significantly higher (more than one hour) for values of $n \geq 15$ due to the non-convexity of the problem.}} The result shows that the optimal least square cost is smallest when $\alpha$ is approximately close to $\alpha_{tr}$ with a relatively low variance. In Fig. \ref{fig:opt_alpha}, we show how the optimal cost varies as a function of $\alpha$ for three different instances with $n = 5$ and $\alpha_{tr} = 50$. The variation is qualitatively similar in all other instances we have considered.}

\begin{table}[htb]
\centering
\caption{\small \rev{Learned value of $\alpha$ for different network sizes and $\alpha_{tr}$ for $10$ independent runs.}}
\label{tab:case3}
\begin{tabular}{l c  c  c c}
{} & {} & $\alpha_{tr} = 10$ & $\alpha_{tr}  = 50$ & $\alpha_{tr}  = 100$\\
\toprule
& Mean & $10.25$& $47$& $95.7$\\
$n = 5$ & Std. & $1.03$ &  $1.41$ & $2.87$\\ 
& Worst Dev. &$12$ & $45$ & $91$\\
\midrule
& Mean & $9.11$& $46.8$& $94.6$\\
$n = 10$ & Std. & $0.78$&  $2.25$& $3.17$\\ 
& Worst Dev. & $8$ &$44$ & 87 \\
\bottomrule
\end{tabular}
\label{table:learn_alpha}
\end{table}

\rev{We now illustrate how the inferred or predicted epidemic states compare with the true states. We consider a randomly generated network on $10$ nodes. The plot in the left panel of Fig. \ref{fig:learned_vs_predicted} shows the synthetic testing data at two nodes between $T_1 = 30$ to $T_2 = 60$ with $\alpha_{tr} = 10$. For ease of exposition, we assume $\tau = 0$. We first solve the problem in \eqref{eq:ls_opt_new} for different values of $\alpha$ and learn that the optimal least squares cost is smallest for $\alpha = 9$. We then learn the infection and recovery rates as well as the proportion of susceptible and infected nodes at time $T_1-1$ by solving \eqref{eq:ls_opt_new}. Finally, we predict the SIR epidemic states using the learned values following \eqref{eq:for}.}

\rev{The evolution of the true and predicted proportions of infected and susceptible individuals for the two different nodes ($v_2$ and $v_3$) are shown in the middle and right panels of Fig. \ref{fig:learned_vs_predicted}. As the figure shows, for node $3$, the epidemic has spread significantly by time $T_1$. However, our estimate of the initial proportion of infected and susceptible individuals lead to a very accurate predicted state trajectory. We also note that while our estimates of initial states and recovery rates tend to be quite accurate, the learned values of infection rates ($\beta_{ij}$ values) tend to be very different from their true values; nevertheless, the predicted states closely align with the true states, consistent with the observations in \cite{prasse2020network} (also see Remark \ref{remark:prasse}).}


\subsection{Performance of optimal NPIs}

\begin{figure}
\begin{subfigure}{.45\linewidth}
  \centering
  \begin{tikzpicture}[-,>=stealth',shorten >=1pt,auto,thick,main node/.style={circle,draw,minimum size = 0.6cm,inner sep=0pt}]
  
  \node[main node] (1) {$\mathtt{CH}$};
  \node[main node] (2) [below = 0.7cm of 1] {$\mathtt{IT}$};
  \node[main node] (3) [above = 0.7cm of 1] {$\mathtt{DE}$};
  \node[main node] (4) [right = 0.8cm of 1] {$\mathtt{AT}$};
  \node[main node] (5) [left = 0.8cm of 1] {$\mathtt{FR}$};

  \path[every node/.style={font=\sffamily\small}]
    (1) edge (3)
    (1) edge (5) 
    (1) edge (4)     
    (1) edge (2)
    (3) edge (5)
    (2) edge (5)
    (2) edge (4)
    (3) edge (4);
\end{tikzpicture}  
\label{fig:graph}
\end{subfigure}
\begin{subfigure}{.45\linewidth}
  \centering
  \includegraphics[width=\linewidth]{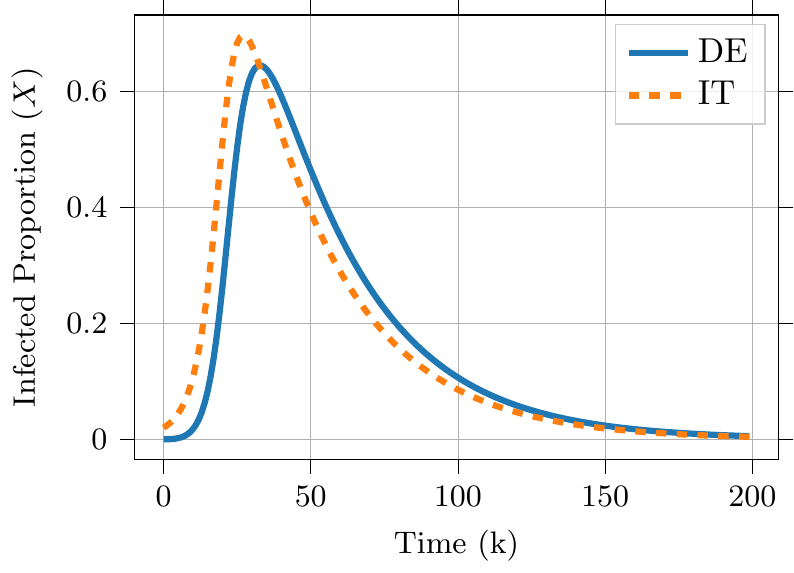}
    \label{fig:dynamics}
\end{subfigure}
\caption{Network topology and evolution of proportion of infected subpopulations without NPIs at nodes DE and IT.}
\label{fig:baseline}
\end{figure}

We now evaluate the effectiveness of the proposed online NPI allocation scheme via simulations. We consider a network with $5$ nodes whose topology is given in Fig.~\ref{fig:baseline}. The nodes are roughly modeled after five countries in continental Europe: France (FR), Germany (DE), Italy (IT), Austria (AT) and Switzerland (CH). Two nodes are neighbors if they share a border. \rev{The chosen infection and recovery rates are shown in Table \ref{tab:beta_gamma_baseline}, which satisfy Assumption \ref{assm:1} with $h = 1$.} The evolution of the infected proportions under these parameters at two nodes (DE and IT) with the initially infected proportions being $0.02$ at node IT and $0$ elsewhere is shown in Fig. \ref{fig:baseline}.

\begin{table}[htb]
\centering
\caption{\small \rev{Baseline infection and recovery rate parameters}}
\begin{tabular}{lllllll}
 & \multicolumn{5}{c}{$\beta$}         & $\gamma$ \\ \cline{2-6}
                  & DE   & FR   & AT   & IT   & CH   &                        \\ \hline
DE                & 0.05 & 0.05 & 0.05 & 0    & 0.05 & 0.03                   \\ 
FR                & 0.05 & 0.2  & 0    & 0.03 & 0.05 & 0.03                   \\ 
AT                & 0.05 & 0    & 0.2  & 0.05 & 0.04 & 0.03                   \\ 
IT                & 0    & 0.03 & 0.05 & 0.2  & 0.05 & 0.03                   \\ 
CH                & 0.05 & 0.05 & 0.04 & 0.05 & 0.2  & 0.03                   \\ \hline
\end{tabular}
\label{tab:beta_gamma_baseline}
\end{table}

We solve the problems stated in \eqref{eq:GP_SIR} and \eqref{eq:GP_SIR2} \rev{using the Python solver GPkit \cite{gpkit}}, and evaluate their performance in controlling the spread of the SIR epidemic. Following \cite{preciado2014optimal}, we consider the following cost functions for NPIs: 
\begin{equation}\label{eq:NPI_cost_fn}
f_{ij}(\beta_{ij}) = \frac{\beta_{ij}^{-1} - \beta^{-1}_{u, ij}}{\beta^{-1}_{l, ij} - \beta^{-1}_{u, ij}}, \quad \text{and} \quad g_{i}(\bar{\gamma}_{i}) =  \frac{\bar{\gamma}_{i}^{-1} - \bar{\gamma}_{u,i}^{-1}}{\bar{\gamma}_{l,i}^{-1} - \bar{\gamma}_{u,i}^{-1}};
\end{equation}
recall that $\bar{\gamma}_{i} = 1-h\gamma_{i}$. 

\begin{figure*}
\centering
    \includegraphics[scale=0.6]{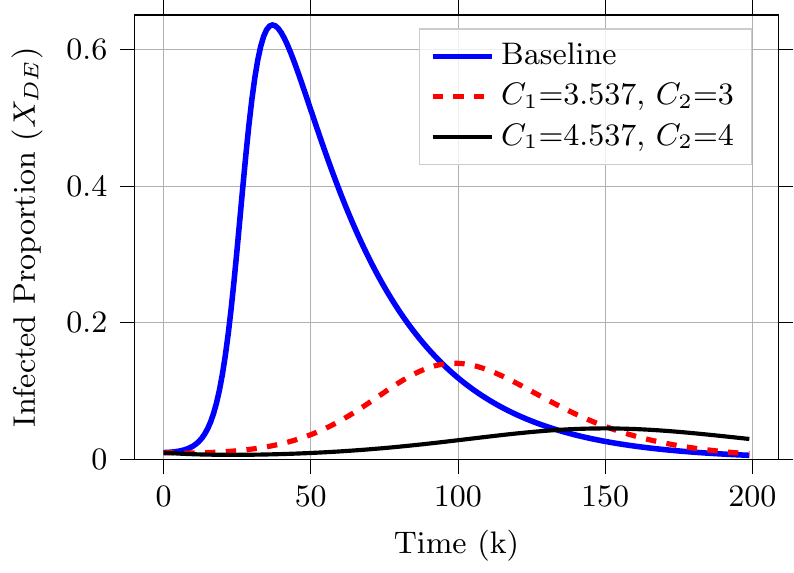}	  
    ~\includegraphics[scale=0.6]{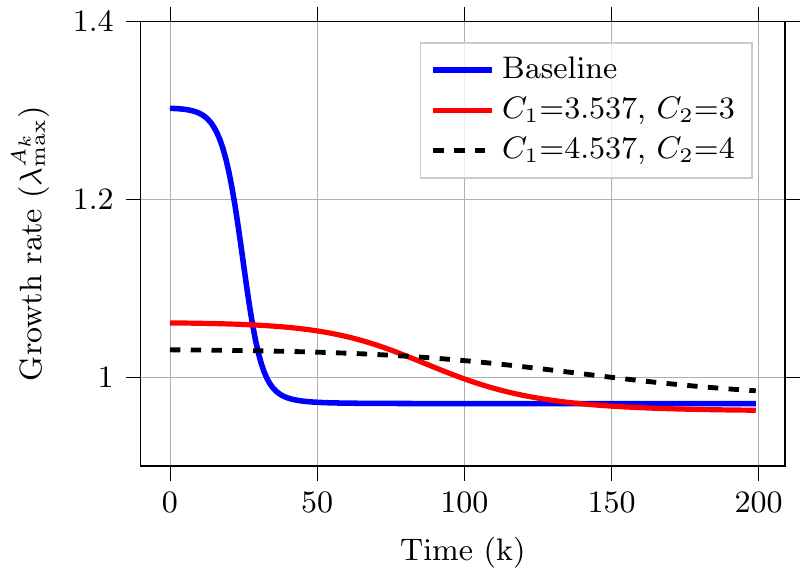}
    ~\includegraphics[scale=0.6]{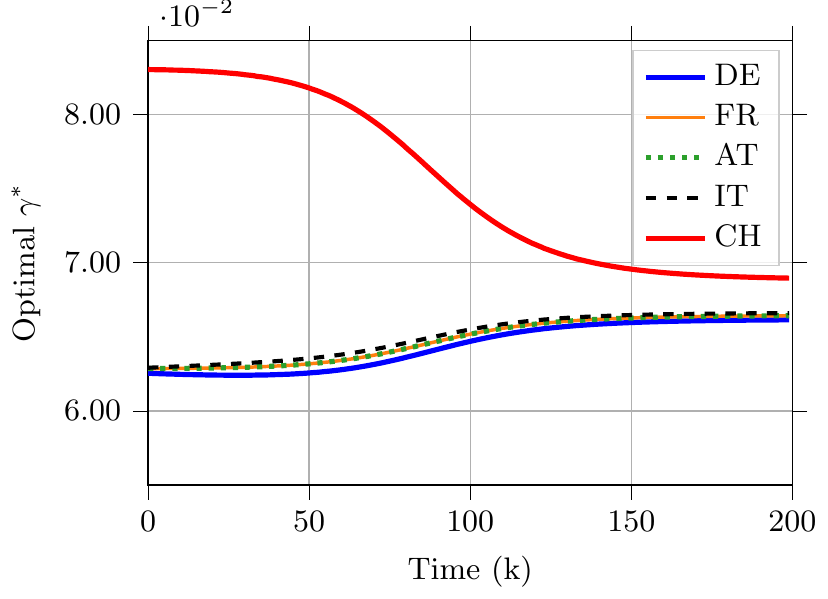}
    \caption{\small Evolution of the infected proportions at node DE, optimal growth rate ($\lambda^*_k$) and optimal recovery rates ($\gamma^*$) under optimal budget-constrained NPIs obtained by solving \eqref{eq:GP_SIR}. Baseline scenario refers to the case without NPIs with parameters shown in Table \ref{tab:beta_gamma_baseline}.}
    \label{fig:curve_flattening}
\end{figure*}

\begin{figure*}
\centering
    \includegraphics[scale=0.6]{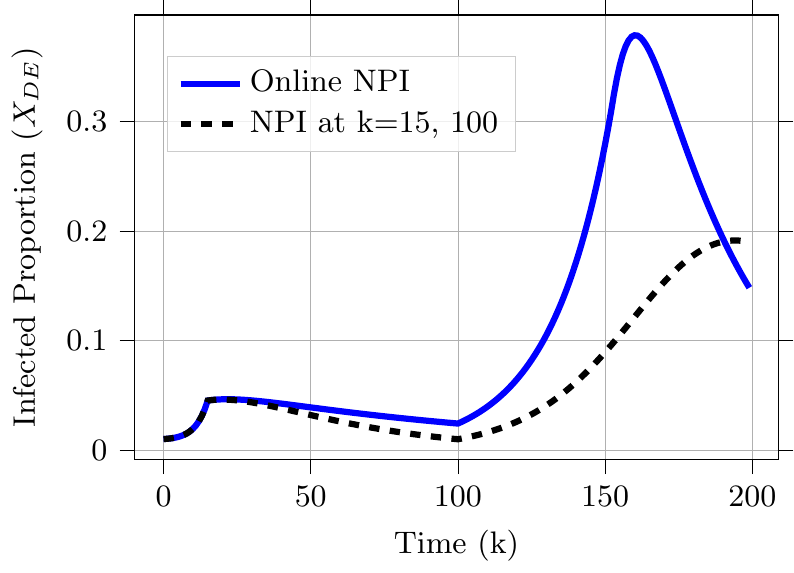}	  
    ~\includegraphics[scale=0.6]{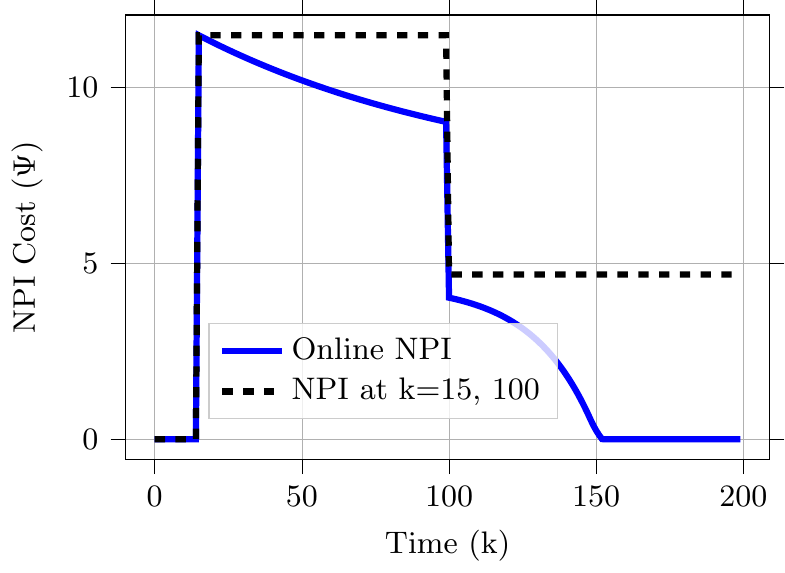}
    ~\includegraphics[scale=0.6]{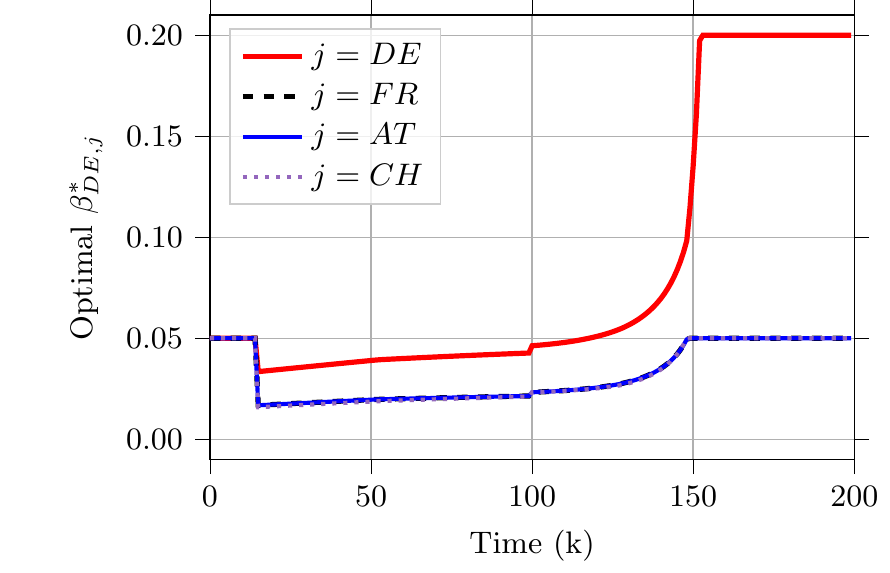}
    \caption{\small Evolution of the infected proportion and the optimal cost of NPIs obtained by solving \eqref{eq:GP_SIR2} for different constraints on $\lambda^{A_k}_{\max}$.}
    \label{fig:secondwave}
\end{figure*}

Note that $f_{ij}$ is monotonically decreasing (i.e., it is costly to reduce contact rates), \rev{with $f_{ij}(\beta_{l,ij}) = 1$, $f_{ij}(\beta_{u,ij}) = 0$, and $f_{ij}(\beta_{ij}) \in [0,1]$ for $\beta_{ij} \in [\beta_{l, ij},\beta_{u, ij}]$. In other words, as $\beta_{ij}$ decreases from $\beta_{u,ij}$ to $\beta_{l,ij}$, the corresponding cost increases from $0$ to $1$.} The function $g_i$ has analogous properties. 

\rev{In our evaluation, we choose the bounds such that $\bar{\gamma}_i \in [0.91,0.97]$, $\beta_{ii} \in [0.02,0.2]$ for $i \in [n]$ and $\beta_{ij} \in [0.005,0.05]$ when $i \neq j$.} The upper and lower limits on $\beta_{ii}$ and $\beta_{ij}$ are chosen differently to reflect the fact it is easier to reduce contact between individuals from different subpopulations compared to individuals within a subpopulation.

In order to isolate the performance of the online optimization approaches described in Section \ref{section:optimization}, we assume that $s(k)$ is exactly known, and report the results under NPIs obtained by solving \eqref{eq:GP_SIR} at every instance for different budget combinations. We assume that the initial fraction of infected nodes is $0.01$ in DE, and $0$ for all other nodes. 

Fig.~\ref{fig:curve_flattening} shows the evolution of the proportions of the infected individuals at node DE, and the instantaneous growth rate $\lambda^{A_k}_{\max}$ at the optimal NPIs. The figure shows that higher budgets lead to ``flattening" of the curve of the infected proportion. The peak of the infected proportion is smaller and occurs much later compared to the baseline setting without NPIs. In addition, we observe that the cumulative number of people who become infected shows significant reduction depending on the budget. When the budget is sufficiently high, not many people become infected in the first place. Consequently, the proportion of susceptible individuals remains high, which (perhaps counter-intuitively) results in a larger value of $\lambda^*_k$ compared to the cases with smaller budgets for NPIs. Nevertheless, the infected proportion eventually decreases to $0$. \rev{The evolution of the optimal recovery rates with time for different nodes when $C_1 = 3.537$ and $C_2 = 3$ is shown in the right panel of Fig.~\ref{fig:curve_flattening}. Node CH receives a significantly higher share of the budget-constrained recovery rates, potentially due to its central location. These results highlight the importance of optimal allocation of NPIs exploiting the structure of the network.}

We then focus on computing cost optimal NPIs subject to constraints on the growth rate or the reproduction number. We solve the optimization problem in \eqref{eq:GP_SIR2} starting from $T_1 = 15$ with constraint $\lambda_k = 0.99$ for $k\in[16,100]$ and $\lambda_k = 1.05$ for $k\in[101,200]$. The evolution of the infected proportion at node DE and the optimal cost obtained by solving \eqref{eq:GP_SIR2} at every time instance are shown as solid (blue) curves in Fig.~\ref{fig:secondwave}. The dashed (black) curves show the evolution when the NPIs are computed only at $k=15$ and at $k=100$ and maintained over the intervals $k \in [16,100]$ and $k \in [101,200]$. \rev{The evolution of the infection rate parameters $\beta_{DE,j}$ for $j \in \{DE, FR, AT, CH\}$ are shown in the right panel of Fig.~\ref{fig:secondwave}. The figure shows that around $k = 150$, the proportion of susceptible individuals become negligible leading to the growth rate being inherently smaller than the desired value, and as a result, the NPIs are gradually withdrawn.}

Fig.~\ref{fig:secondwave} shows that early interventions reduced the growth of the epidemic, albeit at a high cost of NPIs. As a result, the proportion of susceptible individuals remains high for $k \in [16,100]$. Consequently, the baseline reproduction number remains high and it remains costly to maintain the growth-rate below $1$. Furthermore, if NPIs are withdrawn before the infected proportion becomes negligible, we observe an exponential increase in the infected proportion (i.e., a ``second wave" of infections). 

Fig.~\ref{fig:secondwave} also highlights the importance of regularly adapting the NPI strategy using feedback from testing data. Recall from Theorem \ref{prop:main} that the baseline reproduction number is monotonically decreasing. Accordingly, if optimal NPIs are computed at every time step, then the optimal cost shows a gradual decrease and the optimal solution pertains to a gradual easing of NPIs while maintaining the desired growth rate. In contrast, when updated sporadically (as is the case with the dashed black curve), the NPIs are more stringent. This achieves better performance in reducing the spread of the epidemic, but at a much higher cost.
\section{Empirical Evaluation: Real Data}\label{section:real_empirical}

We have carried out an exploratory analysis of publicly available data for COVID-19 from \cite{Data_COVID} for the five countries mentioned above under the proposed framework. We have used testing data for a period of \rev{eleven months from 1st March 2020 to 31st January 2021}. The testing data is comprised of the number of tests carried out ($z(k)$), the number of confirmed positive cases ($c(k)$), and the number of recoveries ($d(k)$). Note that the death counts are included in the recovery data. Some dates have missing data on the number of tests being carried out. In such cases, linear interpolation of the past and next available data points are used \cite{ciss}. We have filtered the original data by taking a $7-$day moving average to reduce the effect of outliers. \rev{The plots in the left panel of Fig. \ref{fig:real_scalar_inf} show the test positivity rate ($c(k)/z(k)$) and active cases for two countries, Germany (DE) and Italy (IT), respectively.}

\rev{We emphasize that the results in this section are obtained by the proposed methodology which is based on the SIR epidemic dynamic and other assumptions stated in the paper. In the absence of large-scale serology tests and knowledge of testing policies, it is difficult to quantify the accuracy of the inference results in general, and the same caveat applies to the results presented in this paper as well.}

\begin{figure*}
\centering
    \includegraphics[scale=0.6]{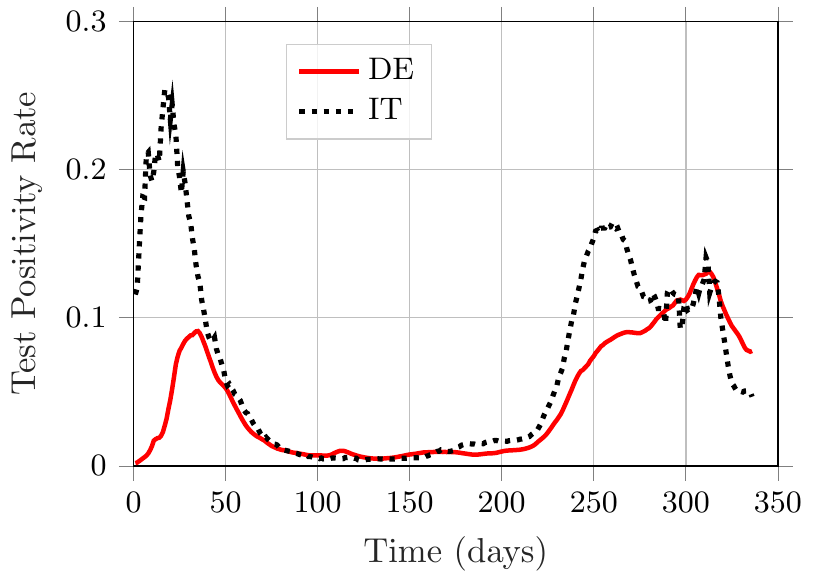}
    ~\includegraphics[scale=0.6]{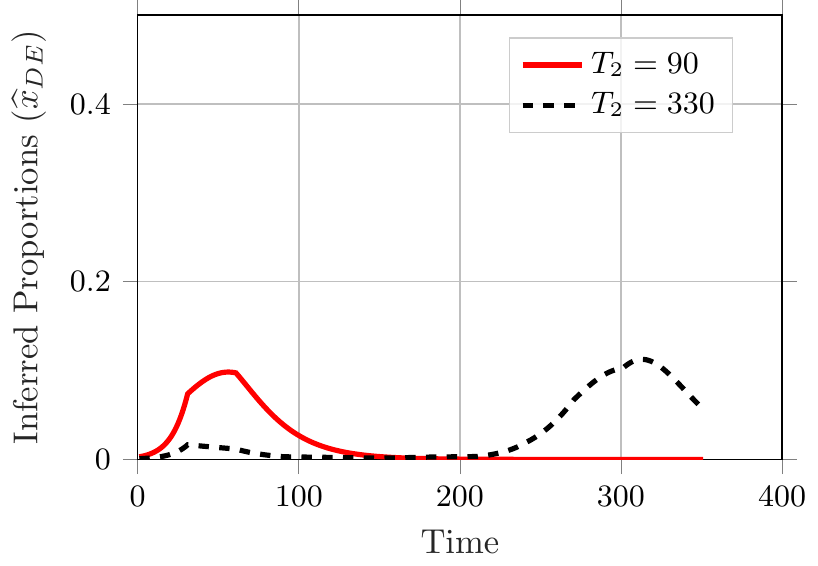}
    ~\includegraphics[scale=0.6]{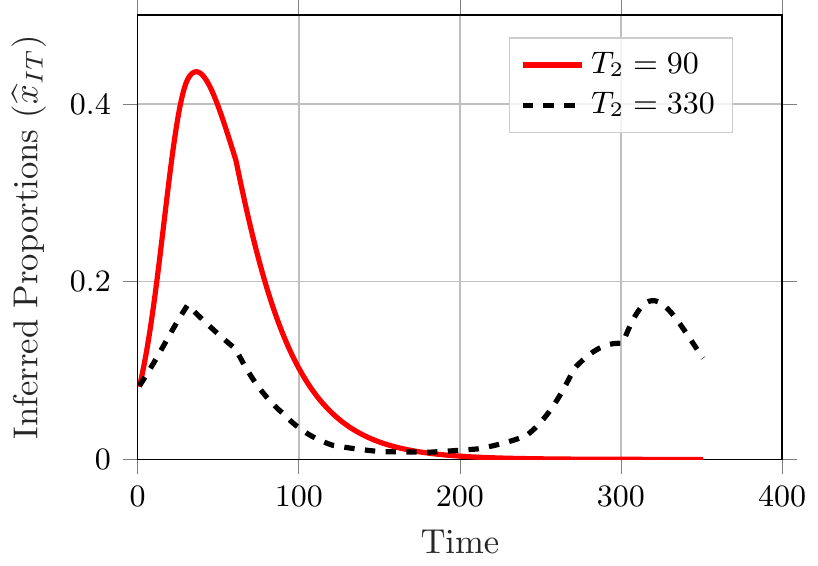}
    \\
    \includegraphics[scale=0.6]{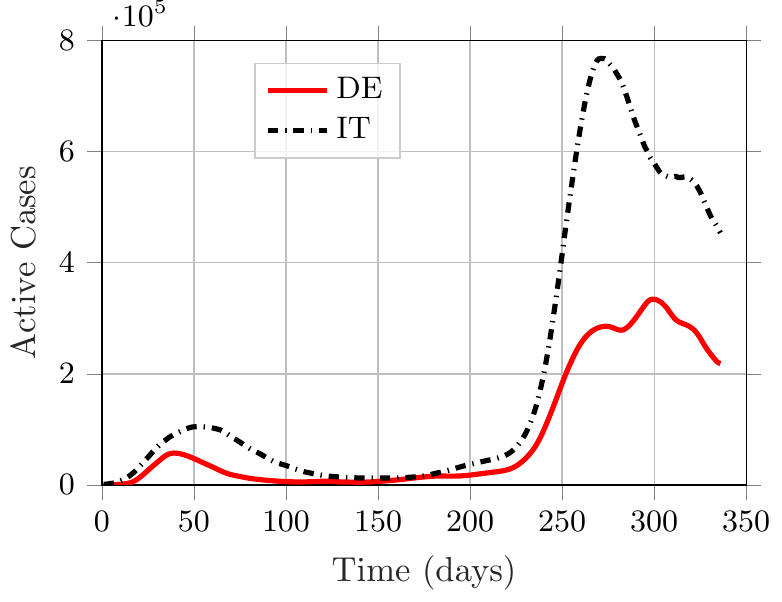}
    ~\includegraphics[scale=0.6]{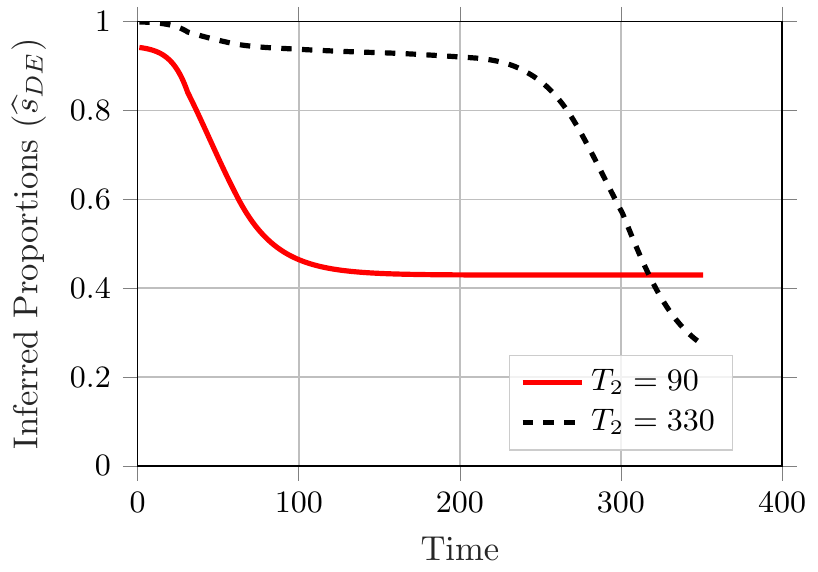}
    ~\includegraphics[scale=0.6]{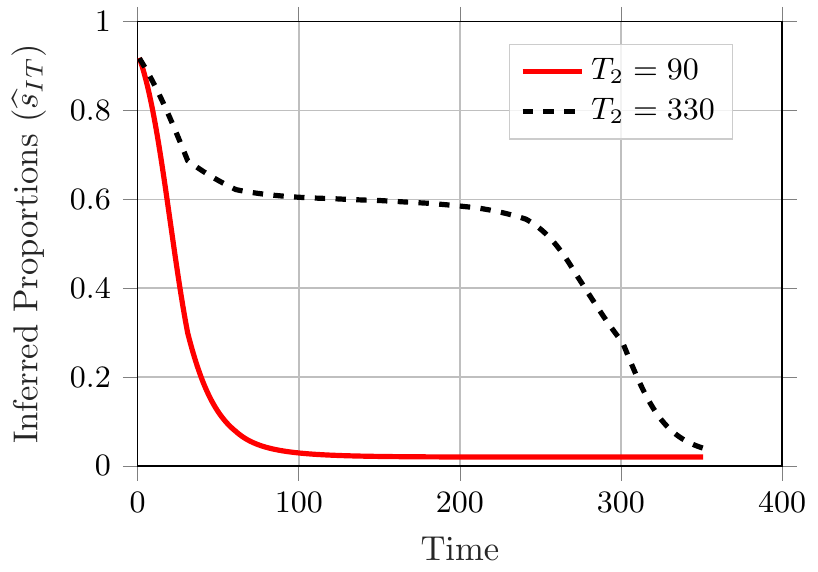}
    \caption{Real testing data and inferred/predicted states for Germany (DE) and Italy (IT) from 1st March 2020 to 31st January 2021 with values of $\alpha$ as stated in Table \ref{tab:inf_alpha_real} for the respective countries and values of $T_2$.}
    \label{fig:real_scalar_inf}
\end{figure*}

\subsection{Inferring epidemic states}

From the above data, we infer the underlying epidemic states, (i.e., the proportions of susceptible, infected, and recovered subpopulations in all five countries) \rev{and the SIR epidemic parameters following the methodology proposed in Section \ref{subsection_testing_risk} and Section \ref{section:least-squareb}.} \rev{We first analyze the testing data for each of the five countries separately with value of $\tau$ set to $0$. For each country, we consider the available testing data from $T_1 = 1$ to different values of $T_2 \in \{90,150,210,270,330\}$ to examine how the inferred states change as more and more testing data becomes available. For each value of $T_2$, we solve the least squares problem in \eqref{eq:ls_opt_new} assuming that $\beta$ and $\gamma$ parameters change every $30$ days. We then learn the initial inferred states and these parameters for integer values of $\alpha$ ranging from $1$ to $100$. The values of $\alpha$ that achieve the smallest least squares cost are reported in Table \ref{tab:inf_alpha_real}.}

\begin{table}[htb]
\centering
\caption{\small \rev{Values of $\alpha$ that achieve the smallest least squares cost for real testing data from $T_1 = 1$ and different values of $T_2$.}}
\begin{tabular}{lclclclclc|c|}
\toprule
$T_2$ & DE  & FR & AT & IT & CH                                       
\\ \midrule
$90$ & $6$  & $12$  & $6$ & 12 & $10$     \\
150   & 6      & 14    & 6    & 14 & 10      \\
210   & 8      & 19     & 10  & 15   & 12     \\
270   & 12    & 26  & 24  & 26 & 36      \\
330   & 24    & 30    & 24  & 31 & 40                             
\\ \bottomrule
\end{tabular}
\label{tab:inf_alpha_real}
\end{table}

\rev{The values in Table \ref{tab:inf_alpha_real} show that as more testing data becomes available, the value of $\alpha$ that best explains the testing data increases. This is evident from the inferred and predicted quantities shown in the middle and right panel of Fig. \ref{fig:real_scalar_inf} for Germany and Italy, respectively. The solid (red) curve shows the predicted infected and susceptible proportions obtained via solving \eqref{eq:for} with the learned values of initial states and parameters when testing data from $T_1 = 1$ to $T_2 = 90$ is used with the value of $\alpha$ shown in the first row of Table~\ref{tab:inf_alpha_real}.} \rev{Fig. \ref{fig:real_scalar_inf} shows that a significant proportion of individuals became infected during the first wave of COVID-19. In fact, the peak infected proportion was more than $0.4$ for Italy.}

\begin{figure*}[tb]
	\centering
	\includegraphics[scale=0.6]{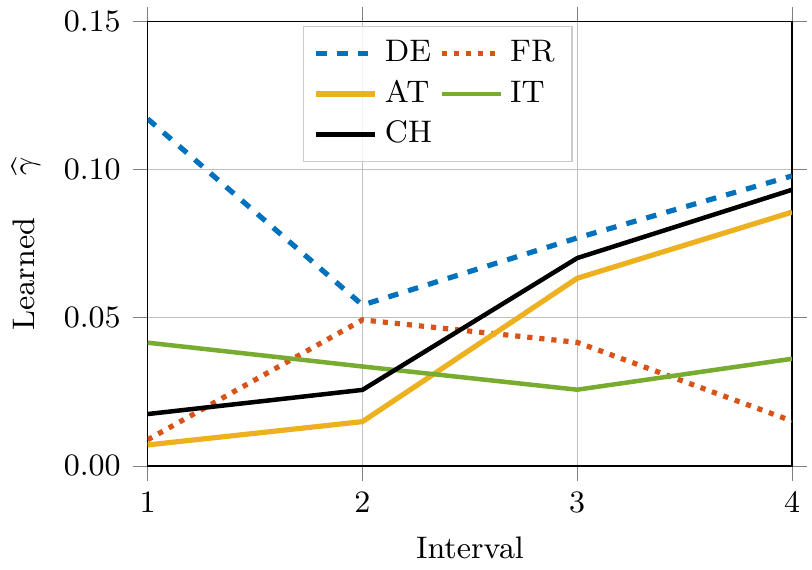}~
	\includegraphics[scale=0.6]{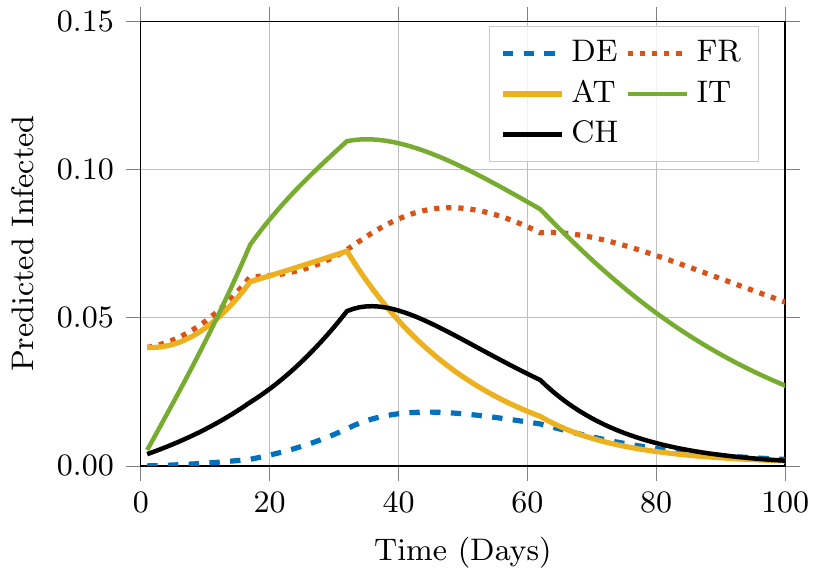}~
	\includegraphics[scale=0.6]{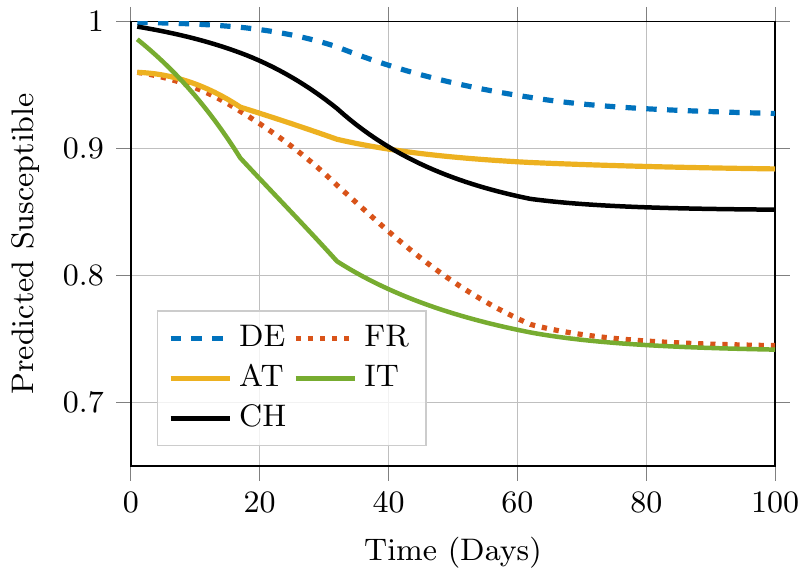}~
	\\
	\includegraphics[scale=0.6]{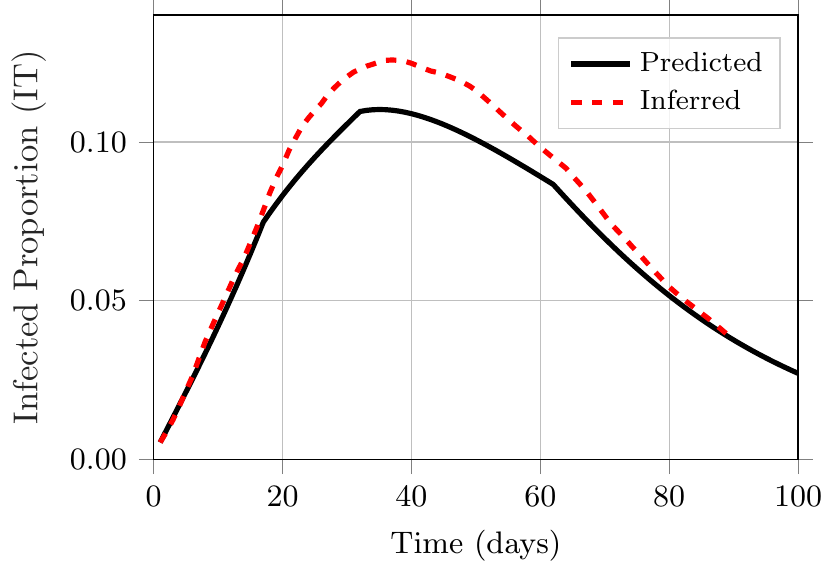}~
	\includegraphics[scale=0.6]{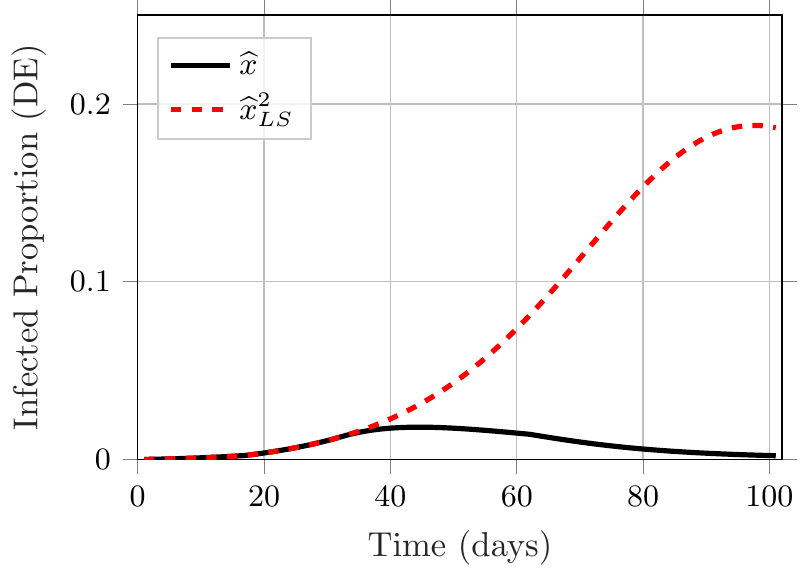}~
	\includegraphics[scale=0.6]{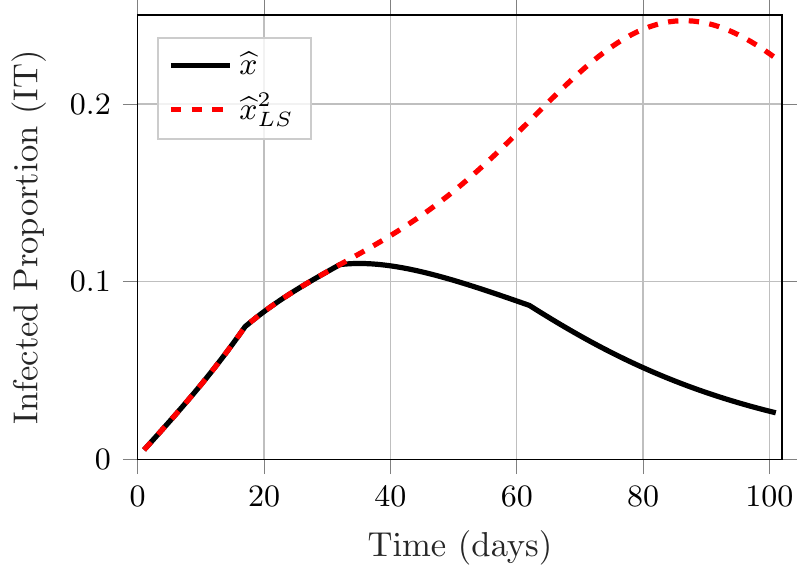}~
	\caption{\footnotesize{The top left panel shows the learned values of recovery/curing rates for different sub-intervals. The middle and right panels in the top row show the predicted proportions of infected and susceptible individuals. The bottom left panel compares the inferred and predicted infected proportions for Italy. The middle and right panels in the bottom row show the predicted and projected infections without NPIs for Germany and Italy; $\widehat{x}^2_{\mathtt{LS}}$ is the forecasted trajectory if the parameters were kept unchanged after the second sub-interval.}}
    \label{fig:IT_ls}
\end{figure*}

\rev{However, data from the second wave of infections indicates that the fraction infected during the first wave was in fact significantly smaller. As a result, the previous value of $\alpha$ (learned using testing data till $T_2 = 90$) results in an infeasible trajectory under testing data from $T_1 = 1$ to $T_2 = 330$. The dashed (black) line in Fig. \ref{fig:real_scalar_inf} shows the inferred/predicted quantities with value of $\alpha$ shown in the bottom row of Table~\ref{tab:inf_alpha_real}. The figure also indicates that the learned initial proportion of infected nodes was around $0.1$ for Italy at the beginning of March while it was negligible for Germany.}

\rev{Thus, our analysis provides valuable insights into the effect of the tuning parameter $\alpha$, the significance of learning the initial infected proportion, and how to tune the inference technique as more and more testing data becomes available. Nevertheless, we clarify that while the values of $\alpha$ reported in Table \ref{tab:inf_alpha_real} should be viewed as quantities that best explain the data assuming that $\alpha$ remains unchanged over the entire time period. In practice, this parameter, which captures the ratio of the likelihood of an infected individual undergoing testing and a healthy individual undergoing testing, should be chosen/learned in collaboration with public health officials. This parameter is also likely time-varying as testing practices evolve over the course of the pandemic. The inference can be further tuned using outcomes of large-scale serological surveys by imposing suitable lower bounds on the proportion of susceptible individuals at different points of time. Further investigations along these lines are beyond the scope of this paper and will be explored in a follow up work.}

\subsection{Inference and estimation in the networked model}

\rev{We now learn the parameters of the networked SIR epidemic dynamic, with the network structure as shown in Fig. \ref{fig:baseline}, by solving the least squares problem in \eqref{eq:ls_opt_new} for the first wave data of $90$ days from 1st March 2020 to 29th May 2020. We divide the time-span into four sub-intervals from day $1$ to $15$, $16$ to $30$, $31$ to $60$ and $61$ to $90$ to better capture the changing lockdown measures being put in place during those periods. Specifically, we learn four different sets of $\beta_{ij}$ and $\gamma_i$ parameters where each corresponds to different sub-intervals. We choose $\alpha = 40$ for our inference as it would be consistent with both first and second wave data as discussed earlier. We also impose the constraint that the initial infected proportion to not exceed $0.05$ at any of the nodes.}

\rev{The learned values of the curing or recovery rates $\gamma$ for different countries and sub-intervals are shown in the top left panel of Fig. \ref{fig:IT_ls}. The learned parameters show that during the initial period, the recovery rates were small which contributed to a rapid increase in the number of infections. However, the recovery rates showed a gradual increase over time except in France and Italy where the variation was not significant. The predicted infected and susceptible proportions (by solving \eqref{eq:for} starting from the initial inferred states and the learned $\beta$ and $\gamma$ parameters) are shown in the middle and right panels in the top row of Fig. \ref{fig:IT_ls}.} The inferred states with delay $\tau=8$, are found to be very similar to a shifted version of the results obtained with $\tau=0$. These results are shown in the supplementary appendix.

\rev{In the bottom left panel of Fig. \ref{fig:IT_ls}, we compare the inferred infected proportion defined via the learned initial proportions and the testing data following \eqref{eq:inf_x_final}, and the predicted infected proportion determined following \eqref{eq:for} starting from the learned initial proportions and using the learned $\widehat{\beta}$ and $\widehat{\gamma}$ parameters for the different sub-intervals for Italy. Both quantities are largely in agreement showing that the learned $\beta$ and $\gamma$ parameters well explain the inferred states. However, note that this plot does not indicate the accuracy of the inferred states since the ground truth is not known. The learned $\beta$ parameters are shown in the supplementary appendix.}

\subsection{Learned vs. optimal NPIs}

\begin{figure*}[tb]
	\centering
	\includegraphics[scale=0.6]{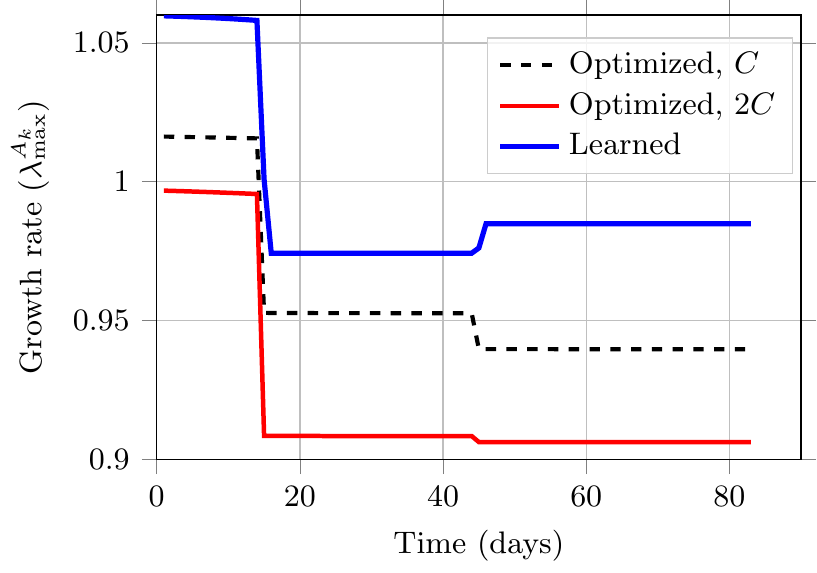}~
	\includegraphics[scale=0.6]{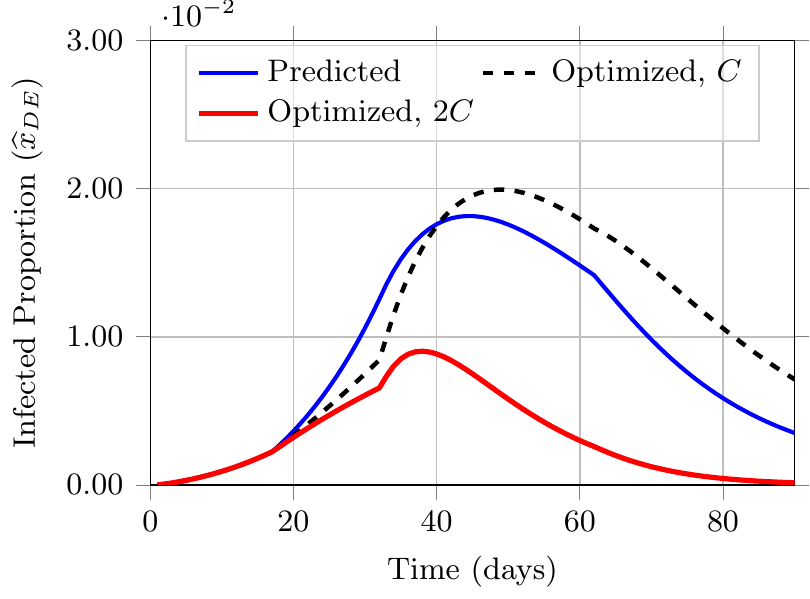}~
	\includegraphics[scale=0.6]{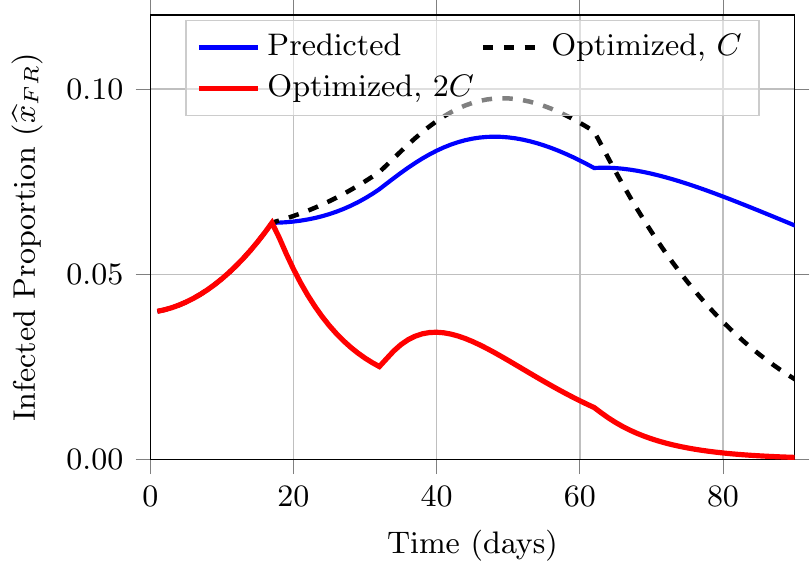}
	\caption{Predicted and optimized trajectories of the growth rate and the infected proportions at nodes DE and FR under optimal recovery rates computed by solving \eqref{eq:GP_SIR} with the original cost (C) and twice the original cost (2C) as budgets. Time/day $0$ on the left panel corresponds to time/day $17$ in the middle and right panels.}
    \label{fig:NPI_Real1}
\end{figure*}

We now illustrate the impact of learned NPIs by computing the trajectory under the parameters learned for the first and the second sub-intervals. Specifically, let $\widehat{x}^2_{\mathtt{LS}}$ be the trajectory of the proportion of infected individuals if the infection and recovery rates were kept equal to the parameters learned from data corresponding to the second sub-interval (days $16$ to $30$). \rev{The plots in the middle and right panels in the bottom row of Fig. \ref{fig:IT_ls}} show that NPIs such as imposing lock-down measures and augmenting healthcare resources during the second half of March significantly reduced the infected proportion compared to the initial growth rate, \rev{which if unchecked, would have led to a significantly higher peak.}

We further explore the optimality of the recovery rate parameters learned from the observed data. \rev{We determine the total cost (C) of the learned recovery rates under the cost function $g_i$ from \eqref{eq:NPI_cost_fn} for different sub-intervals to be $1.388, 1.194, 2.076$ and $2.567$, respectively. In the cost function, we set the upper and lower bounds to be the largest and smallest among the learned parameters for each country over all sub-intervals.\footnote{\rev{Note that the cost function is inversely proportional to $\bar{\gamma}$ and the bounds play the role of a normalization constant. If we allow the upper and lower bounds to be more spread out than the learned parameters, then the denominator in $g_i$ will increase and as a result, performance of the optimized trajectory would be ``better" due to more relaxed constraints and a smaller cost value. Thus, our choice of bounds allows for a fair comparison with the learned parameters.}}} 

We now compute the optimal recovery rates to minimize the growth rate of the epidemic subject to the budget constraints on the cost of the recovery rates. Specifically, we solve \eqref{eq:GP_SIR} for the optimal $\gamma$ once for each sub-interval starting from day $16$. We do not treat the infection rates as decision variables, and use their learned values for the respective sub-intervals. We choose the budgets to be the total cost of the learned NPIs stated above (C) (that is, a reallocation of the existing resources), and twice the cost (2C). The resulting growth rates and the trajectory of the infected proportions for DE and FR are shown in Fig. \ref{fig:NPI_Real1}. The figure shows that with the same total cost of NPIs, the optimal recovery rate allocation would have significantly reduced the infected proportion for FR with a relatively minor increase in the infected proportion of DE. When the optimal NPIs are imposed under twice the budget, the performance is significantly better for all the nodes for all sub-intervals. Thus, our results in this section highlight the importance of a closed-loop approach, i.e., using the inferred states as feedback to compute optimal NPIs in order to mitigate and control networked epidemics such as COVID-19.

\section{Discussions and Extensions}\label{sec:conc}

\rev{In this section, we briefly describe the generality and possible extensions of the proposed framework and highlight promising directions for future research.}

\subsection{Extensions and generalizations}

\subsubsection{Presence of asymptomatic infections} 

\rev{In this work, we focus on the SIR epidemic dynamic as it is one of the most fundamental mathematical models of epidemic evolution. In order to capture the presence of asymptomatic carriers in COVID-19, and the fact that early testing was mostly carried out for (high risk) symptomatic individuals, we introduce the delay factor $\tau$ in our inference strategy. Related epidemic models have explicitly captured asymptomatic infections as an additional epidemic state \cite{ansumali2020modelling}. In order to infer the fraction of nodes in the asymptomatic state, one needs data regarding the number of tests carried out on asymptomatic individuals and the number of confirmed cases therein. Since publicly available testing data does not have this information, we did not explore this further. Nevertheless, analyzing such higher order models remains an interesting avenue for future research.}

\subsubsection{Splitting subpopulations via different criteria} 

\rev{The salient feature of the proposed NPI computation approach is that if the proportion of susceptible individuals in different subpopulations are known, then the evolution of the infected proportions follows a linear dynamics characterized by a nonnegative irreducible matrix. Consequently, the problem of minimizing the spectral radius of this matrix can be formulated as a GP. Therefore, the proposed approach is still valid when the individuals are divided into different subpopulations based on their degree of interaction, age, comorbidity status, among others \cite{grundel2020much}. The interpretation of the NPIs would then be in terms of rate of interaction among individuals across different subpopulations and rate of recovery for each subpopulation (for instance, a social planner could assign higher medical resources to the older population and allow increased interaction among younger or healthier subpopulations.)}

\rev{Nevertheless, the decision-maker needs to know the the current susceptible proportion in each subpopulation which requires information about the number of tests carried out and the number of confirmed cases for different subpopulations. In other words, testing data needs to be made available at a higher granularity for efficient containment of epidemics.} 

\subsection{Avenues for future research}

\rev{We now summarize several interesting open problems for the community to explore moving forward.}

\subsubsection{Controlling generalized spreading processes} 

\rev{Compartmental epidemic models, such as the SIR and SIS epidemic models, have been used to study the spreading of computer viruses and malware \cite{dabarov2020heterogeneous}, opinions, rumours and memes \cite{dadlani2017mean, zhuang2016information,salehi2015spreading}, among others, and also on multi-layer networks \cite{sahneh2017contact,sahneh2014competitive}. In addition, competition and coexistence of multiple viruses \cite{yang2017bi,liu2019analysis} also introduce further complexity. Controlling the evolution of such processes in an online manner with limited observations is an important problem where suitable modifications of the proposed approach would be applicable.}

\subsubsection{Incorporating behavioral response to NPIs}

\rev{The success of NPIs depend on the effectiveness with which they are enforced. While the social planner computes the optimal contact rates leading to policy guidelines, individuals may or may not follow them based on their own perception of infection risk. Therefore, game-theoretic models of human decision-making under epidemic risks \cite{hota2019game,hota2020impacts} need to be integrated into the proposed closed-loop framework for improved prediction and control of spreading processes.}

\subsubsection{Robust and stochastic optimization}

\rev{Optimal control (including MPC based) approaches for epidemic containment require the current state information as feedback. However, the estimated state as well as certain model parameters that are used to compute optimal NPIs are uncertain and often differ when different forecasting techniques are deployed. Nevertheless, the decision-maker needs to ensure that the optimal solution performs well despite the uncertainty. Developing tractable robust and stochastic geometric programming formulations for optimal NPI computation remains an important avenue for future research.}

\subsubsection{Distributed inference and optimization} 

\rev{In this work, we solve the optimal control problems in a centralized fashion. This is motivated by the fact that in many instances, a central authority (such as the federal government of a country) decides to impose lockdown measures and other NPIs over a large geographical area with multiple regions. Nevertheless, the problem of solving the inference and optimal control problems in a distributed manner is of great interest. There have been a few recent approaches on distributed NPI computation for the SIS epidemic \cite{enyioha2015distributed,mai2018distributed,ramirez2018distributed} where a GP, analogous to our formulation, is solved in a distributed manner. One advantage of the proposed approach over the nonlinear MPC based approaches \cite{carli2020model,kohler2020robust} is that distributed convex optimization approaches can be readily applied to solve the GP. However, the accuracy of the estimated infection prevalence at each node of the network and whether nodes are willing to share accurate information (for example, due to political reasons) with neighboring nodes remain as important challenges towards developing distributed approaches.}

\subsubsection{Heterogeneous delay factors}

\rev{We introduce a constant delay factor to model the delay in onset of symptoms for infected individuals. However, the delay in onset of symptoms and due to testing is heterogeneous. A generalization to the case with heterogeneous delay parameters was studied in \cite{zhang2021estimation}. Epidemic state inference under heterogenous and distributed delays remains an interesting direction for future work.}

\section{Conclusion}
In this paper, we introduce a closed-loop framework to estimate and control the spread of the SIR epidemic on networks. We first rigorously establish the behavior of the discrete-time SIR epidemic dynamic; specifically that the dynamic is well-defined, the susceptible proportions and the growth rate decrease monotonically over time, the growth rate eventually falls below $1$, and after some point the infected proportions converge at least linearly to zero. We also incorporate several characteristics of real-world testing data in the state estimation task and examine the impacts of allocating NPIs (such as reducing contact rates and augmenting healthcare equipment and personnel) by solving suitable GPs in an online manner. Our results provide compelling insights on the behavior of the epidemic dynamics under NPIs and will be a valuable tool for policy-makers.

\section*{Acknowledgement}
The authors thank Prof. Shreyas Sundaram, Humphrey Leung, Baike She (Purdue University), and Damir Vrabac (Stanford University) for helpful discussions that contributed to this work. The authors thank Sanket Kumar Singh for his contributions in data analysis and visualizations.

\appendix
\section{Proof of Theorem \ref{prop:main}.}\label{section:proof}

\begin{proof}

We present the proof for each part of the theorem, starting with 1). 

1) We prove this result by induction. By assumption $s_i(0), x_i(0),$ $r_i(0) \in [0,1]$ and $s_i(0) + x_i(0) + r_i(0) = 1$ for all $i \in [n] $. Recall from Assumption \ref{assm:1} that $s_i(0)>0$ and $(1-h\sum^n_{j=1} \beta_{ij}) > 0$. Therefore, from \eqref{eq:sir_si}, we have $s_i(1) \geq (1-h\sum^n_{j=1} \beta_{ij}) s_i(0) > 0$. Since $ h \big[-s_i(0) \sum^n_{j=1} \beta_{ij} x_j(0) \big]\leq 0$, $s(1) \leq s(0)\leq 1$. By Assumption~\ref{assm:1} and \eqref{eq:sir_xi}, $x_i(1) \geq (1-h\gamma_i) x_i(0) \geq 0$. 
Since $x_j(0) \leq 1$ for all $ j\in [n]$, by \eqref{eq:sir_xi} and Assumption \ref{assm:1}, $x_i(1)\leq x_i(0) + s_i(0)h\sum^n_{j=1} \beta_{ij}\leq x_i(0) + s_i(0)\leq 1$. 
By \eqref{eq:sir_ri} and the non-negativity of $h$ and $\gamma_i$, and since $x_i(0)\geq 0$, we have $r_i(1) \geq r_i(0) \geq 0$. By \eqref{eq:sir_xi} and Assumption \ref{assm:1}, we have $r_i(1) \leq r_i(0) + x_i(0) \leq 1$.
Adding up \eqref{eq:sir_si}-\eqref{eq:sir_ri}, gives that $s_i(1) + x_i(1) + r_i(1) = s_i(0) + x_i(0) + r_i(0)$, which by assumption equals~1.

Now assume for an arbitrary $k$, $s_i(k), x_i(k), r_i(k) \in [0,1]$ and $s_i(k) + x_i(k) + r_i(k) = 1$. Following the exact same arguments as for the base case except replacing $0$ with $k$ and $1$ with $k+1$, it can be shown that $s_i(k+1), x_i(k+1), r_i(k+1) \in [0,1]$ and $s_i(k+1) + x_i(k+1) + r_i(k+1) = 1$. Therefore, by induction, $s_i(k), x_i(k), r_i(k) \in [0,1]$
and
$s_i(k) + x_i(k) + r_i(k) = 1$ 
for all $k\geq0$ and $i\in [n]$. In addition, if $s_i(0) > 0$, the above argument implies that $s_i(k)>0$ for all $i\in [n]$ and for every finite $k$. As a result, the matrix $A_k$ remains irreducible for all finite $k$.

2) By 1) and the non-negativity of the $\beta_{ij}$'s, it follows that $ h \big[-s_i(k) \sum^n_{j=1} \beta_{ij} x_j(k) \big]\leq 0$ for all $k\geq 0$. Therefore, from \eqref{eq:sir_si}, we have $ s_i(k+1) \leq s_i(k)$. 

3) Since the rate of change of $s(k)$, $- h \diag(s(k)) B  x(k)$, is non-positive for all $k\geq0$ and $s(k) $ is lower bounded by zero, by  1), we conclude that $\lim_{k\rightarrow \infty}s(k)$ exists. Therefore 
\begin{equation}\label{eq:lim_schange}
    \lim_{k\rightarrow \infty}- h \diag(s(k)) B x(k) = \0,
\end{equation} 
where $\0$ is the vector of dimension $n$ with all entries equal to $0$. 
Therefore, $\lim_{k\rightarrow \infty} x(k+1) - x(k) = \lim_{k\rightarrow \infty}-h\diag(\gamma) x(k)$. Thus, by the assumption that $h \gamma_i>0$ for all $i \in [n]$ \rev{and 1), we have that} 
$$\rev{\lim_{k\rightarrow \infty} x_i(k+1) - x_i(k) = \lim_{k\rightarrow \infty}-h \gamma_i x_i(k)\leq 0,}$$
\rev{where equality holds only if $\lim_{k \to \infty}x_i(k)=0$ and if the inequality is strict, $x_i(k)$ is decreasing. Therefore,}  $\lim_{k \to \infty} x_i(k) = 0$, for all $i \in [n]$.

4) Recall that by Assumption \ref{assm:1}, $A_k$ is an irreducible non-negative matrix and thus by the Perron-Frobenius Theorem for irreducible non-negative matrices \cite[Theorem~2.7 and Lemma~2.4]{varga}, $\lambda^{A_k}_{\max} = \rho(A_k)$, where $\rho(\cdot)$ indicates the spectral radius. By \cite[Theorem~2.7]{varga}, $\rho(A_k)$ increases when any entry of $A_k$ increases. Therefore, by 2) and since $A_k$ is defined as $\mathbf{I}_n + h \diag(s^k) B - h \diag(\gamma)$, we have that 
$$\rho(A_k) \geq \rho(A_{k+1}),$$
which implies
$$\lambda^{A_k}_{\max} \geq \lambda^{A_{k+1}}_{\max}.$$

5) There are two possible types of equilibria for the SIR model: i) $\lim_{k\rightarrow \infty}s(k) = \0$, or ii)~$\lim_{k\rightarrow \infty}s(k) = s^* \neq \0$. We explore the two cases separately. 

i) If $\lim_{k\rightarrow \infty}s(k) = \0$, then the rate of change of $x(k)$ \rev{will become} 
$-h\diag(\gamma)x(k)$. Therefore, by 
the definition of \rev{$A_{k}$ and the assumption that $h\gamma_i < 1, \forall i \in [n]$},
there exists a $\bar{k}$ such that $\lambda^{A_{k}}_{\max}< 1$ for all $k \geq \bar{k}$. 

ii) If 
$\lim_{k\rightarrow \infty}s(k) = s^* \neq \0$, then, by 3), for any $(s(0),x(0),r(0))$ the system dynamics converge to some equilibria of the form $(s^*,\0,\1-s^*)$, for some nonzero $s^*$. 
Define 
\begin{equation}\label{eq:eps}
    \epsilon_s(k):= s(k) - s^* \text{ and } \epsilon_x(k):= x(k) - \0.
\end{equation}By 2) and 1), respectively, we know that $\epsilon_s(k)\geq0$ and $\epsilon_x(k)\geq 0$ for all $k\geq 0$. Furthermore, we know that $\epsilon_s(k+1)\leq \epsilon_s(k)$ for all $k\geq 0$, $\lim_{k\rightarrow \infty}\epsilon_s(k) = \0$, and $\lim_{k\rightarrow \infty}\epsilon_x(k) = \0$. 

Let $\lambda^{A^*}_{\max}$ be the maximum eigenvalue 
of $\mathbf{I}_n + h \diag(s^*) B - h \diag(\gamma)$ with corresponding normalized left eigenvector $w^*$, that is, 
\begin{equation}\label{eq:eigen}
     w^{*\top}(\mathbf{I}_n + h \diag(s^*) B - h \diag(\gamma)) = \lambda^{A^*}_{\max} w^{*\top}.
\end{equation}
\rev{Note that since $\mathbf{I}_n + h \diag(s^*) B - h \diag(\gamma)$ is irreducible, $w^*_i>0$, for all $i \in [n]$, by the Perron-Frobenius Theorem.}

\rev{Assume on the contrary that $\lambda^{A^*}_{\max}\geq 1$}. Left multiplying the equation for $x(k+1)$ in \eqref{eq:sir_dt_x} by $w^{*\top}$ and using 
\eqref{eq:eps} and \eqref{eq:eigen}, we obtain
\begin{align*}
    & w^{*\top}\epsilon_x(k+1) 
    \\ & \quad = w^{*\top} [\mathbf{I}_n + h \diag(\epsilon_s(k) + s^*) B - h \diag(\gamma)] \epsilon_x(k) \\
    & \quad =\lambda^{A^*}_{\max}w^{*\top}\epsilon_x(k)  + w^{*\top} h \diag(\epsilon_s(k)) B \epsilon_x(k) \\ 
    & \quad \rev{\geq} w^{*\top}\epsilon_x(k)  + w^{*\top} h \diag(\epsilon_s(k)) B \epsilon_x(k),
\end{align*}
where the last equality holds since 
$\lambda^{A^*}_{\max}\rev{\geq}1$.
Thus, $$w^{*\top}(\epsilon_x(k+1)-\epsilon_x(k)) \rev{\geq} w^{*\top} h \diag(\epsilon_s(k)) B \epsilon_x(k) \geq 0.$$ 
\rev{If either inequality is strict, clearly, $\lim_{k\rightarrow \infty}x(k) \neq \0$, which is a contradiction to 3). 
If equality holds, $w^{*\top}(\epsilon_x(k+1)-\epsilon_x(k))=0$ implies $\epsilon_x(k+1)=\epsilon_x(k)$, since $w^*_i>0$, for all $i \in [n]$. 
Then, since by assumption, $x_i(0) > 0$ for some~$i$, $\lim_{k\rightarrow \infty}x(k) \neq \0$, again a contradiction to 3). }
Therefore, there exists a $\bar{k}$ such that $\lambda^{A_{k}}_{\max}< 1$ for all $k \geq \bar{k}$.

6) Since, by 5), there exists a $\bar{k}$ such that $\lambda^{A_{k}}_{\max} < 1$ for all $k \geq \bar{k}$, and we know that $\lambda^{A_{k}}_{\max} = \rho(A_{k})\geq 0$ by Assumption~\ref{assm:1}, we have 
\begin{equation}\label{eq:conv}
    \lim_{k \to \infty} \frac{\|x(k+1)\|}{\|x(k)\|} = \frac{\|A_k x(k)\|}{\|x(k)\|} \leq \lambda^{A_{k}}_{\max}<1.
\end{equation}
Therefore, for $k \geq \bar{k}$, $x(k)$ converges linearly to $\0$.
\end{proof}

\sloppy
\bibliographystyle{plainnat}
\bibliography{refs,refs_new}

\clearpage
\onecolumn

\section*{Supplementary Appendix: Additional Results from the Analysis of Real Testing Data}\label{section:supplement}

\subsection{Learned Parameters}

The following tables show the learned infection rate ($\beta_{ij}$) parameters computed by solving \eqref{eq:ls_opt_new} for different sub-intervals for $\alpha = 40$ and testing data from March 1st 2020 (day 1) to 29th May 2020 (day 90). 

\begin{table*}[htb]
    \begin{minipage}{.48\linewidth}
      \centering
        \begin{tabular}{c|c|c|c|c|c}
        $-$ & DE  & FR & AT & IT & CH  \\ \hline
        DE & 0.0724  &  0  &  0  & 0  &  0.0158 \\
        FR & 0 &  0.015  &  0  & 0.035  &  0 \\
        AT & 0  &  0  &  0  & 0.0498  &  0 \\
        IT & 0  &  0.1  &  0  &  0.0377  &  0 \\
        CH & 0  &  0.020 &  0  &   0.009  &  0 \\ \hline
    \end{tabular}
    \caption{Learned $\beta_{ij}$ parameters for Days 1-15.}
    \label{tab:contact_12}
    \end{minipage}
    ~
    \begin{minipage}{.48\linewidth}
      \centering
        \begin{tabular}{c|c|c|c|c|c}
        $-$ & DE  & FR & AT & IT & CH  \\ \hline
        DE & 0.11  &  0.0007  &  0.004  & 0  &  0 \\
        FR & 0.18 &  0.047  &  0  & 0  &  0 \\
        AT & 0  &  0  &  0.027  & 0  &  0 \\
        IT & 0  &  0.095  &  0  &  0 &  0 \\
        CH & 0.168  &  0 &  0 &  0.021 & 0 \\ \hline
    \end{tabular}
    \caption{Learned $\beta_{ij}$ parameters for Days 31-60.}
    \label{tab:contact_78}
        \end{minipage}
\end{table*}

\begin{table*}[htb]
    \begin{minipage}{.48\linewidth}
      \centering
        \begin{tabular}{c|c|c|c|c|c}
        $-$ & DE  & FR & AT & IT & CH  \\ \hline
        DE & 0.018  &  0.001  &  0.021  & 0  &  0.003 \\
        FR & 0 &  0  &  0  & 0  &  0.099 \\
        AT & 0  &  0  &  0.017  & 0  &  0 \\
        IT & 0  &  0  &  0.047  &  0  &  0 \\
        CH & 0  &  0 &  0.063  &   0  &  0.003 \\ \hline
    \end{tabular}
    \caption{Learned $\beta_{ij}$ parameters for Days 61-90.}
    \label{tab:contact_12}
    \end{minipage}
    ~
    \begin{minipage}{.48\linewidth}
      \centering
        \begin{tabular}{c|c|c|c|c|c}
        $-$ & DE  & FR & AT & IT & CH  \\ \hline
        DE & 0  &  0.001  &  0.039  & 0  &  0.004 \\
        FR & 0 &  0  &  0  & 0  &  0.057 \\
        AT & 0  &  0  &  0  & 0.003  &  0 \\
        IT & 0  &  0  &  0  &  0.002 &  0.034 \\
        CH & 0  &  0 &  0 &  0 & 0.026 \\ \hline
    \end{tabular}
    \caption{Learned $\beta_{ij}$ parameters for Days 16-30.}
    \label{tab:contact_78}
        \end{minipage}
\end{table*}

\subsection{Inference with different values of $\alpha$}

\begin{figure*}[htb]
	\centering
	\includegraphics[scale=0.7]{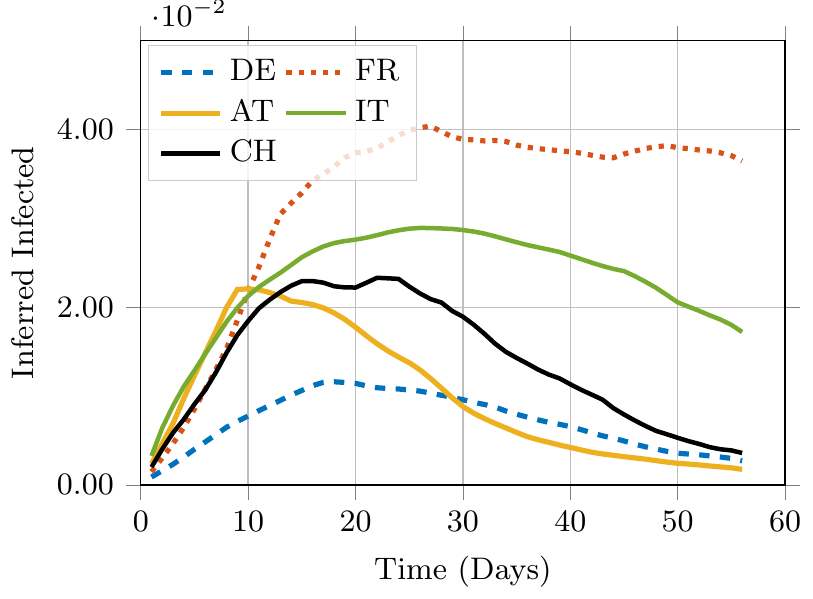}~~~
	\includegraphics[scale=0.7]{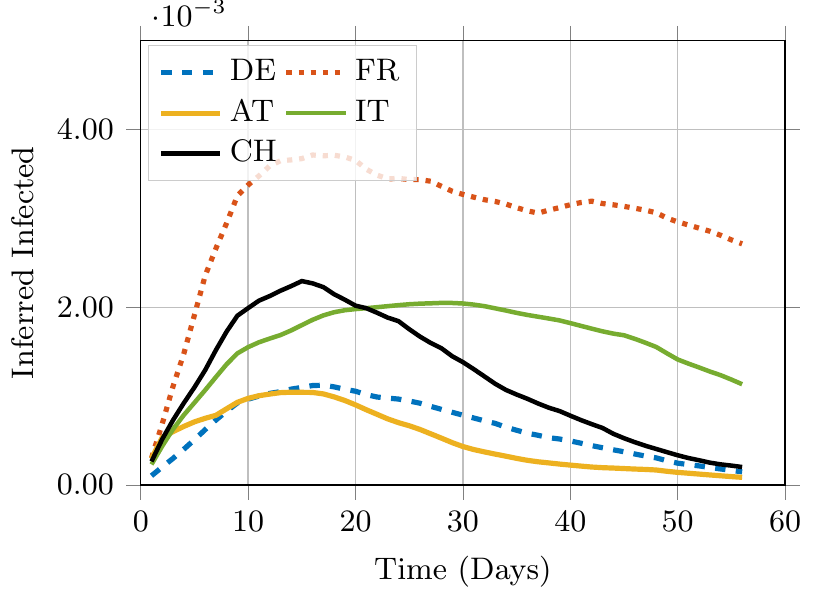}~
	\caption{Inferred states with $\alpha = 100$ and $\alpha = 1000$, respectively.}
    \label{fig:supp_alpha}
\end{figure*}

The above figure shows the inferred infected proportions from March 16th 2020 to May 12th 2020 assuming that the initial infected proportions are zero. Larger value of $\alpha$ indicates that the infected proportion is smaller in magnitude. 

\subsection{Inference with delay factor}

\begin{figure*}[htb]
	\centering
	\includegraphics[scale=0.7]{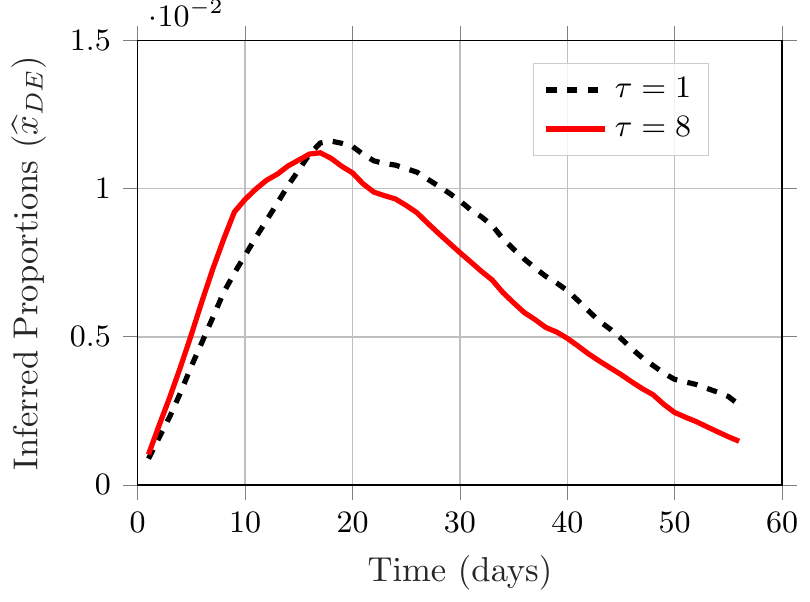}~
	\includegraphics[scale=0.7]{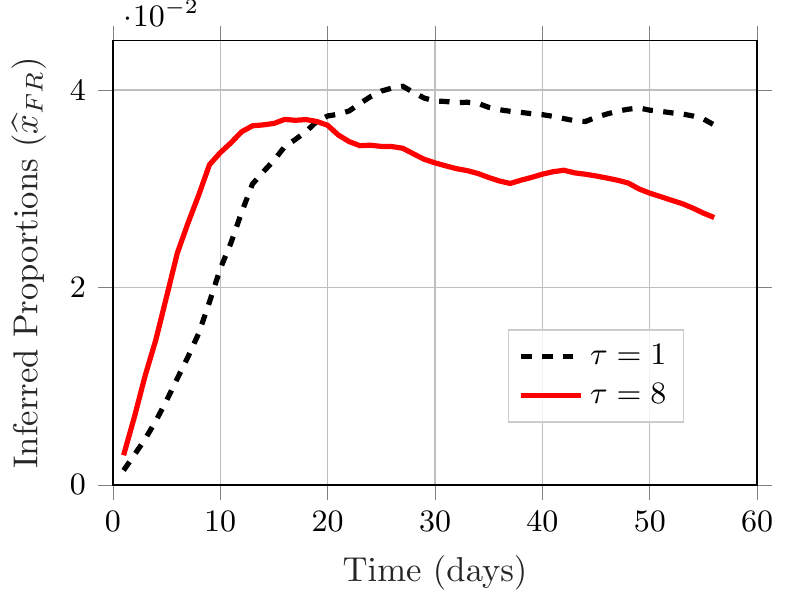}~
	\includegraphics[scale=0.7]{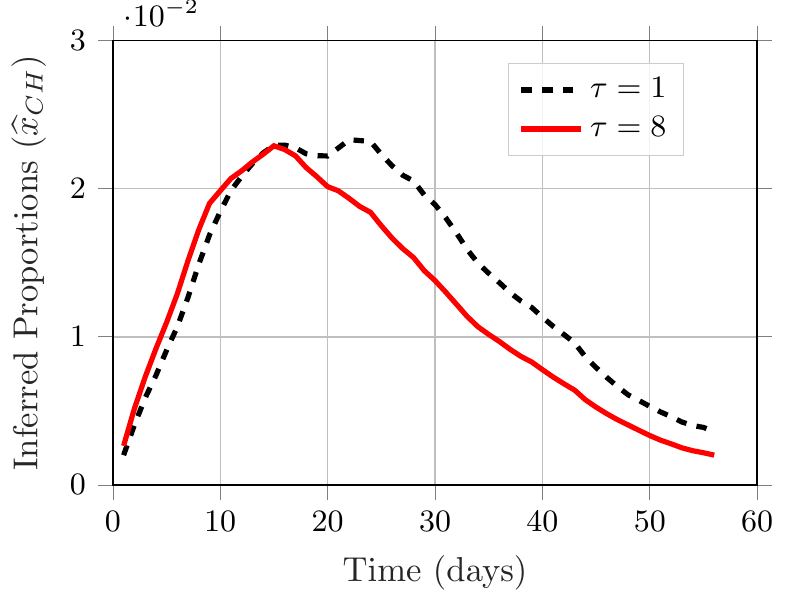}
	\caption{Inferred states with and without delay and with initial infected proportion $0$ and $\alpha = 100$.}
    \label{fig:supp_tau}
\end{figure*}

The above figure shows that the delay factor primarily shifts the inferred states to the left, indicating that the infected proportion was larger than anticipated in the early stage of the pandemic.

\end{document}